\def\mode{0}% 0: arxiv, 1: conf
\title{The Collatz process embeds a base conversion algorithm\thanks{Research supported by the European Research Council (ERC) under the EU’s Horizon
2020 research and innovation programme, grant agreement No 772766, Active-DNA project, and Science Foundation
Ireland (SFI), grant number 18/ERCS/5746.}}
\author{Tristan Stérin\thanks{\href{mailto:tristan.sterin@mu.ie}{tristan.sterin@mu.ie}. URL: 
\url{https://dna.hamilton.ie/tsterin/}.}  
\qquad 
Damien Woods\thanks{\href{mailto:damien.woods@mu.ie}{damien.woods@mu.ie}. URL: 
\url{https://dna.hamilton.ie/}.} 
\vspace{2ex} 
\\
% Hamilton Institute, Department of Computer Science \\ Maynooth University, Ireland \\
Hamilton Institute \\
Department of Computer Science \\ 
Maynooth University  % , Ireland \\
}
\date{}
\spnewtheorem{op}{Open Problem}{\bfseries}{}
\newcommand\cqca{CQCA\xspace}
\newcommand\rcqca{rCQCA\xspace}
\newcommand\W{{\mathbb{Z}^2}}
\newcommand\world{\W}
\newcommand\celltype{S}
\newcommand\ntemplate{T}
\newcommand{\Lim}[1]{\raisebox{0.5ex}{\scalebox{0.8}{$\displaystyle \lim_{#1}\;$}}}
\newcommand\bit{s}
\newcommand\car{c}
\newcommand\state{\texttt{\textup{state}}}
\newcommand\seum{\texttt{\textup{total}}}
\newcommand\Pa{\mathcal{P}}
\newcommand\ibin[1]{[\![#1]\!]_2}
\newcommand\itpt[1]{[\![#1]\!]_{3' \to 3}}
\newcommand\ittp[1]{[\![#1]\!]_{3 \to 3'}}
\newcommand\itr[1]{[\![#1]\!]_{3}}
\newcommand\itp[1]{[\![#1]\!]_{3'}}
\newcommand\N{\mathbb{N}}
\newcommand\Z{\mathbb{Z}}
\newcommand\Q{\mathbb{Q}}
\newcommand\Ea{\texttt{\textup{EAST}}}
\newcommand\We{\texttt{\textup{WEST}}}
\newcommand\No{\texttt{\textup{NORTH}}}
\newcommand\So{\texttt{\textup{SOUTH}}}
\newcommand\Ce{\texttt{\textup{CENTER}}}
\newcommand\bdef{half-defined}
\newcommand\fdef{defined}
\newcommand\undef{undefined}
\newcommand{\para}[1]{{\vspace{1.5ex}\noindent\bf #1.}}
\newcommand{\arxivLemThreeX}{5\xspace}
\newcommand{\arxivLemTwoX}{6\xspace}
\newcommand{\arxivThComplexity}{28\xspace}
\begin{document}

\title{The Collatz process embeds\\ a base conversion algorithm\thanks{Research supported by the European Research Council (ERC) under the EU’s Horizon
2020 research \& innovation programme, grant agreement No 772766, Active-DNA project, and Science Foundation
Ireland (SFI), grant number 18/ERCS/5746. Email: \texttt{tristan.sterin@mu.ie} and  \texttt{damien.woods@mu.ie}}}
\author{Tristan Stérin
\and
Damien Woods
}

\authorrunning{T. Stérin and D. Woods}

\institute{Hamilton Institute \\   
Department of Computer Science\\
 Maynooth University \\
\url{https://dna.hamilton.ie}
}

\maketitle 
\vspace{-4ex}             
\fi

\begin{abstract}

The Collatz process is defined on natural numbers by iterating the map $T(x) = T_0(x) = x/2$ when $x\in\mathbb{N}$ is even and $T(x)=T_1(x) =(3x+1)/2$ when $x$ is odd. In an effort to understand its dynamics, and since Generalised Collatz Maps are known to simulate Turing Machines [Conway, 1972], it seems natural to ask what kinds of algorithmic behaviours it embeds. 
We define a quasi-cellular automaton that  exactly simulates the Collatz process
on the square grid:
on input $x\in\mathbb{N}$, written horizontally in base~2, successive rows give the Collatz sequence of $x$ in base~2. 
We show that vertical columns simultaneously iterate the
map in base 3. 
This leads to our main result: the Collatz process embeds an algorithm that converts any natural number from base 3 to base 2. 
We also find that the evolution of our automaton  computes the parity of the number of 1s in any ternary input. 
It follows that predicting about half of the bits of the iterates $T^i(x)$, for $i = O(\log x)$, is in the complexity class NC$^1$ but outside AC$^0$.
\if 0\mode
Finally, we show that in the extension of the Collatz process to numbers with infinite binary expansions ($2$-adic integers) [Lagarias, 1985], 
our automaton encodes the cyclic Collatz conjecture as a natural reachability problem. \fi
These results show that the Collatz process is capable of some simple, but non-trivial, computation in bases~2 and~3,  suggesting an algorithmic approach to thinking about prediction and existence of cycles in the Collatz process.

\if 1\mode
\keywords{Collatz map
\and Model of computation \and Reachability problem}
\
\fi

\end{abstract}

% !TEX root = collatz2-main.tex
%auto-ignore

\newcommand{\Nset}{\mathbb{N}}

\section{Introduction}
The Collatz process is defined on natural numbers by iterating the map $T(x) = T_0(x) = x/2$ when $x\in\mathbb{N}$ is even and $T(x)=T_1(x) =(3x+1)/2$ when $x$ is odd.
The Collatz conjecture states that for all $x\in\N\setminus\{0\}$ a finite number of iterations lead to~1.
We know that 1-variable generalised Collatz maps (iterated linear equations of a single natural number variable with arbitrary mod) are capable of running arbitrarily algorithms~\cite{Conway}, modulo an exponential simulation scaling, which motivates our point of view in this paper:  how complicated are the dynamics of the Collatz process? Perhaps the reason this process has resisted understanding is that it embeds algorithm(s) that solve problems with high computational complexity?
Or perhaps by showing that this is not the case, we can get a handle on its dynamics?

We study the structure of iterations of the Collatz process both in binary and in ternary.
We define a 2D  quasi-cellular automaton (\cqca) that consists of a local rule, and a non-local\footnote{The CQCA could easily be altered  to remove the non-local rule and have it be a CA, but the obvious ways to do so would involve using more states and make the mapping to the Collatz process less direct. The \cqca has the freezing property~\cite{goles2015freezing,vollmar1981freezing}; states change obey a partial order, with a constant number of changes permitted per cell.} rule. A natural number is encoded as a binary string input to the \cqca, whose subsequent dynamics exactly execute the Collatz process with one horizontal row per odd iterate. Simultaneously, vertical columns simulate all iterations (both odd and even), but in base 3.

\if 0\mode
  \subsection{Results}
\else
  \para{Results}
\fi
Our main result, Theorem~\ref{th:base_conversion}, is that the \cqca embeds a base conversion algorithm that can convert any natural number $x$ from base~3 to base~2.
This base conversion algorithm is natural and efficient, running in $\Theta(\log x)$ Collatz iterations.
The result puts strict constraints on the short-term dynamics of the Collatz process and  enables us to characterize the complexity of natural prediction problems on the dynamics of the \cqca : predicting about half of the bits of the iterates $T^i(x)$, for $i = O(\log x)$, is in the complexity class NC$^1$ but outside AC$^0$  (see Figure~\ref{fig:influence}).
Our algorithmic perspective is suggestive of an open problem: \emph{small Collatz cycles}: Does there exist $x> 2\in\N$ such that $x = T^{\leq \lceil \log_2 x \rceil}(x)$?
\if 0\mode
  Finally, in the generalisation of the Collatz process to numbers with infinite binary expansion (i.e.\ to $\Z_2$, the ring of $2$-adic integers~\cite{10.2307/2322189}), we show that the \emph{reverse} \cqca can be used to construct the $2$-adic expansion of any rational number involved in a Collatz cycle and that the cyclic Collatz conjecture reformulates as a natural reachability problem.
\fi

The proofs of our main result and supporting lemmas use the fact that the {\em  local} (CA-like) component of the \cqca can be thought of as simultaneously iterating two finite state transducers (FSTs).
One FST computes $x/2$ in ternary on vertical columns, the other FST computes $3x+1$ in binary on  horizontal rows,
as shown in Figure~\ref{fig:fst}. Interestingly, the two FSTs are dual in the sense that states of one are symbols of the other, and arrows of one are read/write instructions of the other. In addition, intuitively, the {\em non-local}  component of the \cqca initiates, or ``bootstraps'', these two FSTs by providing the location of the least significant bit to each.

\if 0\mode
  \subsection{Related and future work}\label{sec:related work}
\else
  \para{Related and future work}
\fi
Since the operations $3x+1$ and $x/2$ have natural base~2 interpretations,
studying the process in binary has been fruitful
\cite{Shallit1992The,DBLP:journals/tcs/Colussi11,DBLP:journals/tcs/Hew16,capco:hal-02062503}.
For example, in binary, predecessors in the Collatz process are characterised by  regular expressions~\cite{1907.00775_RP_2020_sub}: for each $x,k\in\Nset$ there is a regular expression, of size exponential in $k$,   that characterises the binary representations of all $y\in\Nset$ that lead to $x$ via $k$ applications of $T_1$ and any number  of $T_0$.

Cloney, Goles and Vichniac give a unidimensional ``quasi'' CA that simulates the Collatz process~\cite{cloney19873x+}.
Their system works in base 2, for any $x\in\Nset$ given as an input in base 2, successive downward rows encode the iterates $x,T(x),T^2(x),\ldots$, in binary.
The choice of whether to apply $x/2$ and $3x+1$ is done explicitly based on the least significant bit, hence the rule is not local.
In a similar spirit, Bruschi~\cite{nlin/0502061} defines two distinct non-local 1D CA-like rules, one which works in base 2 and the other in base 3.
Finally, in base 6, the Collatz process can be expressed as a local CA because carry propagation does not occur \cite{Korec1992,inpKari_Jarkko13a}.
However, because these systems do not include carries in the automaton state's space, our base conversion result, and parity-based observations on the structure of Collatz iterates, are not apparent, nor is the  \cqca-style embedding of dual FSTs that compute $3x+1$ and $x/2$.

Base conversion is a problem which has been studied in several ways: it is known to be computable by iterating Finite State Transducers \cite{baseChanges}, it is known to be in the circuit complexity class NC$^1$ \cite{hesse2002uniform} and the complexity of computing base conversion of real numbers (infinite expansion) has been explored \cite{Adamczewski2007,baseChanges,Jakobsen2020}. While we know that our framework generalises well to the Collatz process running on infinite binary strings (i.e. the extension~\cite{10.2307/2322189,Monks2004,Rozier2019ParitySO} of $T$ to $\Z_2$, see Remark~\ref{rk:z2}),\if 0\mode\ and we have hints that, in the case of Collatz cycles, our base conversion result applies in that infinite setting (Conjecture~\ref{cj:bc}),\fi\ we leave to future work to show how our base conversion result applies in that case: for instance, can the \cqca convert any element $x\in\Z_3 \cap \Z_2 \cap \Q$ from $\Z_3$ (i.e. infinite ternary expansion) to $\Z_2$ (i.e. infinite binary expansion)?

Generalised Collatz maps, and related iterated dynamical systems, of both one and two variables, simulate Turing machines~\cite{Conway,moore1990unpredictability,moore1991generalized,Koiran1999,DBLP:conf/tamc/KurtzS07,moore2011nature}.
With two variables these maps simulate Turing machines  in real time, just one map application per Turing machine step, and so have a P-complete prediction problem.
The situation is less clear with 1-variable generalised Collatz maps (1D-GCMs), of which Collatz is an instance.
Conway~\cite{Conway} showed that 1D-GCMs simulate Turing machines, but via an exponential slowdown, and it remains as a frustrating open problem whether the simulation can be made to run in polynomial time.\footnote{The problem is closely related to the 2-counter machine problem: when simulating Turing machines, are 2-counter machines exponentially slower than 3-counter machines? See, for example,~\cite{minsky1967computation,Koiran1999,moore2011nature}.}
As with the 2-variable case, Turing Machines simulate 1D-GCMs in polynomial-in-$n$ time (giving an upper bound), but a matching lower bound for predicting $n$ steps of a 1D-GCM on an $n$-digit input remains elusive.
Is the problem in NC, or even in NL? Is it P-complete?
This point of view provides the motivation for a quest to understand the kinds of algorithmic behaviour that might be embedded in 1D-GCMs, and in their prime example, the Collatz map. Here, we show an AC$^0$ lower bound on that prediction problem in the \cqca model, and an NC$^1$ upper bound on roughly half the bits produced over $O(n)$ iterations.
Finding a separation between the Collatz process itself, and  1D-GCMs, is an obvious next step.
The structure we find in Collatz iterations leads us to guess that the \cqca $n$-step prediction problem might be simpler than the analogous problem for 1D-GCMs.
We leave that challenging problem for future work.

We contend that our approach gives a fresh perspective that could lead to progress in understanding the Collatz process. We illustrate with two examples. Firstly, the \cqca  runs in a maximally parallel fashion along a 135$^\circ$ diagonal (see Figure~\ref{fig:colinput}(b)), thus our main base conversion result (Theorem~\ref{th:base_conversion}) implies that such a diagonal encodes a natural number  being converted from base~3 to~2. Hence, it implies that the Collatz process can be interpreted as running along successive \cqca\   {\em diagonals} (rather than rows/columns), giving a new perspective on its dynamics, that we leave to future work. Secondly, by  Theorem~\ref{thm:nc1}, and illustrated in Figure~\ref{fig:influence}, if we pick any cell along a column, the entire upper rectangle (above and to the left of the cell) is tightly characterised by our results: it is a base 3-to-2 conversion diagram (computable in NC$^1$, but not in~AC$^0$). This leaves as future work the lower rectangle (below and to the left) and, we believe, embeds the full complexity of computing $n$ forward Collatz  steps and answering the \textit{small Collatz cycles} conjecture.
Section~\ref{sec:discussion}, Figure~\ref{fig:influence} and Figure~\ref{fig:smallC} contain some  additional discussion.

A simulator, \texttt{simcqca}, was built in order to run the \cqca\footnote{Comprehensive instructions and tutorial at:  \url{https://github.com/tcosmo/simcqca}}. The reader is invited to experience the results of this paper through the simulator.

\begin{figure}[t!] %h!
  \includegraphics[width=\textwidth]{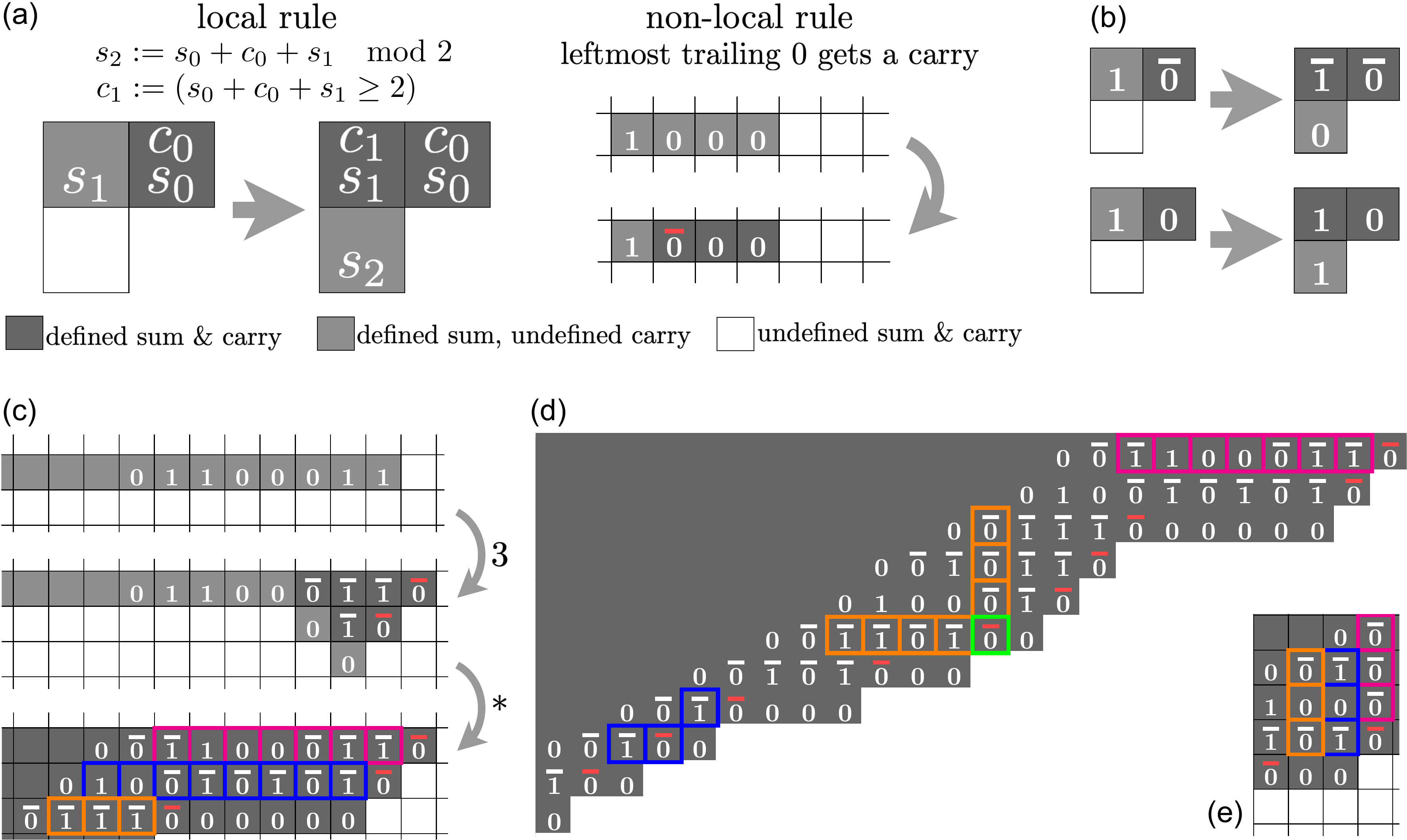}\if 1\mode \vspace{-1.5ex}\fi
  \caption{\cqca definition and examples.  (a)~\cqca local rule, and non-local rule, for sum  bit ($s_2$: written as 0 or 1), and carry bit ($c_1$: over-bar for $c_1=1$, omitted for $c_1=0$), where the bits $c_0$, $s_0$ and $s_1$ are already defined.
  Cell colour distinguishes between carry bit being undefined or $0$, and empty light/dark grey cells have sum bit $0$.
  (b)~Two examples of the local rule.
  (c)~Example run of the \cqca on base~2 (horizontal) input $w = 1100011$,
  showing the initial configuration $c_0[w]$ (Definition~\ref{def:icw}),
  then 3 steps, then $>20$ steps.
  The \cqca evolves in the south-west direction, with successive rows corresponding to odd Collatz iterates in binary (Lemma~\ref{lem:cols}),
  in this case  $(\boldsymbol{99}, \boldsymbol{149}, 224, 112, 56, 28, 14, \boldsymbol{7},\dots)$, i.e.,
  $\ibin{1100011} = 99$ (magenta), $\ibin{10010101} = 149$ (blue) and $\ibin{111} = 7$.
  (d)~Part of the limit configuration $c_\infty[w]$ (Lemma~\ref{lem:lc}) associated to $w = 1100011$. The trivial cycle (1,2), in blue, has been reached. In orange, an instance of the base conversion theorem (Theorem~\ref{th:base_conversion}): north of the green cell (\emph{anchor cell}) reads in base $3'$ (Definition~\ref{def:btp}), $\itp{\bar 0 \bar 0 \bar 0 } = \itr{111} = 13$ and is equal to the base $2$ number represented by the sum bits to the west of the green cell: $\ibin{1101} = 13$.
  (e)~Example run of the \cqca on vertical base 3 input $\alpha = 111$ encoded in base $3'$  as $\bar{0}\bar 0\bar 0$ (Definition~\ref{def:btp}). Each column is a successive $T$-iterate, from magenta to orange, we read: $\itp{\bar 0 \bar 0 \bar 0} = 13$, $\itp{\bar 1  0 \bar 1} = 20 = T(13)$, $\itp{\bar 0 0 \bar 0} = 10 = T(20)$ (Lemma~\ref{lem:cols}), see Figure~\ref{fig:colinput}(a) for a larger vertical example.
  }\label{fig:model}
  \if 1\mode \vspace{-2ex}\fi
\end{figure}
%auto-ignore
% !TEX root = collatz2-main.tex

\if 1\mode
\vspace{-2ex}
\section{The Collatz 2D Quasi-Cellular Automaton}\label{sec:2dcqca}
\vspace{-3ex}
\para{The Collatz process} 
\else
\section{The Collatz 2D Quasi-Cellular Automaton}\label{sec:2dcqca}
\subsection{The Collatz process} 
\fi
Let $\N=\{0,1,\dots\}$, $\N^+=\{1,2,\dots\}$ and let $\Z$ be the integers.
The Collatz process consists
of iterating the Collatz map $T:\N \to \N$ where $T(x) = T_0(x) = x/2$ if $x$ is even and $T(x) = T_1(x) = (3x+1)/2$ if $x$ is odd. 
\if 0\mode
The Collatz conjecture states that for any $x\in\N^{+}$ the process will reach~1 after a finite number of iterations. The cyclic Collatz conjecture states that the only strictly positive cycle is $(1,2,1,\dots)$. The Collatz process can be generalised in a natural way to $\Z$ and in that case, $3$ more cycles are known and conjectured to be the only strictly negative cycles: the cycles of $-1$, $-5$ and $-17$. In fact, the Collatz process can be generalised to a much larger class of numbers that includes both $\N$ and $\Z$ but is uncountable: $\Z_2$, the ring of 2-adic integers which syntactically corresponds to the set of infinite binary words \cite{10.2307/2322189}. Given that generalisation, the Collatz process can be run on more exotic numbers such as $-\frac{4}{23} \in \Z_2 \cap \Q$ and the Collatz conjecture generalises as follows: all rational 2-adic integers eventually reach a cycle\footnote{The set $\Z_2\cap \Q$ exactly corresponds to irreducible fractions with odd denominator and parity is given by the parity of the numerator. For instance, the first Collatz steps of $-\frac{4}{23}$ are: $(-\frac{4}{23},-\frac{2}{23},-\frac{1}{23},\frac{10}{23},\frac{5}{23},\,\dots)$. It reaches the cycle $(\frac{5}{23},\frac{19}{23},\frac{40}{23},\frac{20}{23},\frac{10}{23},\frac{5}{23}$ \dots).\\} (known as Lagarias' Periodicity Conjecture \cite{10.2307/2322189}). While, in Sections~\ref{sec:2dcqca} and~\ref{sec:base conversion} of this paper, we are mainly concerned with finite binary inputs (i.e.\ representing elements of $\N$) our framework can naturally generalize to infinite binary inputs (i.e.\ representing elements of $\Z_2$), see Remark~\ref{rk:z2}.
\else
The Collatz conjecture states that for any $x\in\N$ the process will reach~1 after a finite number of iterations. The cyclic Collatz conjecture states that the only strictly positive cycle is $(1,2,1,\dots)$. 
\fi

\if 0\mode
\subsection{Informal \cqca definition}
\else
\para{\cqca definition} 
\fi
The \cqca  is pictorially defined in Figure~\ref{fig:model} and more formally defined in Section~2.3\if 0\mode. \else\ of the full version of this paper \cite{Collatz2arxiv}.\fi\  
A {\em configuration} is defined on $\Z^2$, where each cell in $\Z^2$ has a state $(s,c) \in S = \{ 0,1,\bot\}^2 \setminus \{(\bot,0),(\bot,1)\} $
containing \emph{sum bit} $s$ and \emph{carry bit} $c$, 
said to be {\em defined} if 0/1 or {\em undefined} otherwise ($\bot$).
We say that the cell is {\em defined} 
if both the sum and carry defined, 
{\em half-defined} if sum is defined and carry undefined, and
{\em undefined} if sum and carry are undefined. 
Note, in all Figures, cell colour distinguishes between a carry bit being undefined or being $0$, and empty light/dark grey cells have sum bit $0$, as defined in Figure~\ref{fig:model}(a).

A configuration update step is parallel and synchronous:  
First, the non-local rule is applied (at most 1 row per step is updated by the non-local rule) then, on the updated configuration, the local rule is applied everywhere it can be.
The non-local rule  implements the $+\,1$ part of the $3x+1$ operation as follows: 
on any horizontal row which has only half-defined cells, with exactly one cell $\rho$ having sum bit 1, then the neighbour immediately to the right of $\rho$ 
gets, \textit{ex nihilo}, a carry bit equal to $1$ (shown in red in Figure~\ref{fig:model}(a), see~\if 0\mode Section~2.3\else\cite{Collatz2arxiv}\fi \ for a formal definition). 
The local rule works as shown in Figure~\ref{fig:model}(a): for any undefined cell $e\in\W$, with a half-defined north neighbour with sum $s_1$, and defined north-east neighbour with sum $s_0$ and carry $c_0$, $e$'s sum bit becomes $s_2 := s_0+c_0+s_1 \mod 2$.  Simultaneously, the carry, $c_1 := (s_0+c_0+s_1\geq 2)$, is placed on the cell to the north of $e$. 
Figure~\ref{fig:colinput}(b) shows that the ``natural'' evolution frontier of the \cqca is along a $135^\circ$ diagonal. 
\if 1\mode Let the unit cardinal vectors in $\Z^2$ be $\Ea=(1,0),\,\We=(-1,0),\,\No=(0,1),\,\So=(0,-1)$.\fi

\if 0\mode
\subsection{Formal \cqca definition}\label{sec:formal_def}
\if 0\mode Let the unit cardinal vectors in $\Z^2$ be $\Ea=(1,0),\,\We=(-1,0),\,\No=(0,1),\,\So=(0,-1)$ and let $\Ce =(0,0)$.\fi
%auto-ignore
% !TEX root = collatz2-main.tex
%\section{Formal definition of the \cqca}\label{app:formal_def}

\begin{definition}[\cqca]
\normalfont 
The \cqca is defined by the 7-tuple $\langle \celltype, \world, \ntemplate_\text{local}, f_\text{local}, \ntemplate_\text{non-local}, f_\text{non-local}, I \rangle$ where:
\begin{itemize}
\item $\celltype = \{ 0,1,\bot\}^2 \setminus \{(\bot,0),(\bot,1)\}$ is the state space, the first component of each pair represents a sum bit and the second a carry bit;
\item $\world$ is the evolution lattice;
\item $\ntemplate_\text{local} = (\Ce, \Ea, \So, \No, \No+\Ea)$ is the local neighbourhood template;\footnote{$\ntemplate_\text{local}$ is a sub-neighbourhood of the Moore neighbourhood.}
\item $f_\text{local}: \celltype^{5} \to \celltype$ is  the local evolution rule;
\item $\ntemplate_\text{non-local} = (\ldots, 2\times\We, \We, \Ce, \Ea, 2 \times \Ea, \ldots)$ is  the non-local neighbourhood template; 
\item $f_\text{non-local}: \celltype^{\mathbb{Z}} \to \celltype$ is the non-local evolution rule;
\item $I = \{ c_0[w] \mid w \in \{0,1\}^* \} \cup \{ c_0[\alpha] \mid \alpha \in \{0,1,2\}^* \}$ is the set of valid initial configurations which are described in Definition~\ref{def:icw} and Definition~\ref{def:ica}.

\end{itemize}

We say that the state $(s,c)\in\celltype$ is: \emph{\undef}\ if $\bit = \bot$ and $\car = \bot$, \emph{\bdef}\ if $\bit \neq \bot$ and $\car = \bot$,  and \emph{\fdef}\ if $\bit \neq \bot$ and $\car \neq \bot$.

The local rule $f_\text{local}(s_\text{center}, s_\text{e}, s_\text{s}, s_\text{n}, s_\text{ne})$  is defined according to three cases:

\begin{enumerate}

    \item If $s_\text{center}=(\bit,\bot)$ is \bdef:
    \begin{enumerate}
        \item if $s_\text{e}=(\bit',\car')$ is \fdef\ then the local rule returns $(\bit,\car'')$ with $\car'' = 1$ if $\bit+\bit'+\car' \geq 2$ else $\car'' = 0$ \textbf{(Carry propagation)}; 
        \item else the local rule returns $s_\text{center}$.
    \end{enumerate} 
    \item Else if $s_\text{center}=(\bot,\bot)$ is \undef: 
        \begin{enumerate}
        \item if $s_\text{n} = (\bit,\bot)$ is \bdef\, and $s_{\text{ne}}=(\bit',\car')$ is \fdef\, then the local rule returns $((\bit+\bit'+\car')\%2,\bot)$ with $\%$ the modulo operator \textbf{(Forward deduction)};
        \item else the local rule returns $s_\text{center}$.
    \end{enumerate}
    \item If $s_\text{center} = (\bit,\car)$ is \fdef: the local rule returns $s_\text{center}$.

\end{enumerate}

The non local rule $f_\text{non-local}(\dots,\, s_\text{2w}, s_\text{w},s_\text{center}, s_\text{e},s_\text{2e},\,\dots)$ is defined according to three cases:

\begin{enumerate}

    \item If $s_\text{w} = (1,\bot)$ and $s_\text{center}=(\bit',\bot)$ with $\bit' \in \{0,\bot\}$ and, for all $i\geq 1$, $s_{i \times \text{e}} = (\bit'', \bot)$ with $\bit'' \in \{0,\bot\}$ then the rule returns $(0,1)$ \textbf{(Bootstrapping, or `+ 1')};
    \item else if $s_\text{w} = (0,\bot)$ and $s_\text{center}=(0,\bot)$ and, there exists $i\geq 1$ such that $s_{i \times \text{w}} = (1,\bot)$ then the rule returns $(0,0)$;
    \item otherwise, the rule returns $s_\text{center}$.

\end{enumerate}

A configuration of the \cqca\ is a function $c: \world \to \celltype$. The evolution of c is given by successive iterations, with each iteration first applying the
non-local rule, and then applying the local rule as follows. In this paper, we consider only evolution of initial configurations given by inputting a row/column to the \cqca: configurations in the set $I = \{ c_0[w] \mid w \in \{0,1\}^* \} \cup \{ c_0[\alpha] \mid \alpha \in \{0,1,2\}^* \}$ (Definition~\ref{def:icw} and Definition~\ref{def:ica}). The evolution of $c_0\in I$ is inductively defined as follows, for $i\in \N$, if $c_i$ is the $i^\text{th}$ iterate of the \cqca\ starting from $c_0$, then $c_{i+1}$ is defined for all $w \in \world$ by: 
\begin{align*}
c'_{i}(w)  &= f_\text{non-local}(\dots,\,c_{i}( w+\eta_{-1}),c_{i}(w+\eta_0),c_{i}(w+\eta_1),\dots)\\
c_{i+1}(w) &= f_\text{local}(c'_{i}(w+\eta'_0),\dots,c'_{i}(w+\eta'_{4}))
\end{align*}
Where $(\dots,\,\eta_{-1},\eta_0,\eta_1,\dots) = \ntemplate_\text{non-local}$ and $(\eta'_0,\dots,\eta'_{4}) = \ntemplate_\text{local}$.

We write $F: S^\W \to S^\W$ to denote the transition function of the \cqca\ on configurations.

\end{definition}

\begin{remark}
A simulator for the \cqca, \texttt{simcqca}, was built to support this research. It can be found here: \url{https://github.com/tcosmo/simcqca} (release \texttt{v0.4} is stable). A comprehensive tutorial is present in \texttt{README.md} which allows one to follow all the results of this paper (and more) interactively. The sources (C++) are documented and can be freely modified (MIT License). Note that the colour convention regarding cells' background in the simulator is the opposite of the paper: darker for undefined, in between for half defined and lighter for defined. Also, for efficiency, some cells which in theory are defined and have value $(0,0)$, are simply left  undefined (i.e. value $(\bot,\bot)$) in the simulator.
\end{remark}
\fi

\begin{figure}[t]
        \centering
        
        \begin{subfigure}[t]{0.475\textwidth}  
            \centering 
            \includegraphics[scale=0.5]{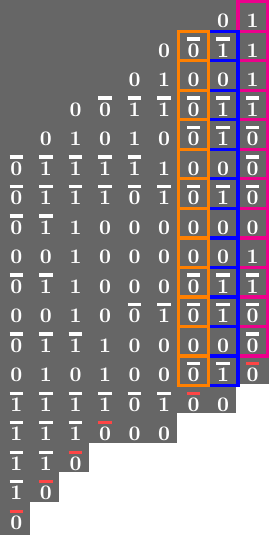}
           \caption[]%
            {{\small Columns in the \cqca iterate the Collatz process in base $3'$. }}
        \end{subfigure}
        \hfill 
        \begin{subfigure}[t]{0.475\textwidth}
            \centering
            \includegraphics[scale=0.6]{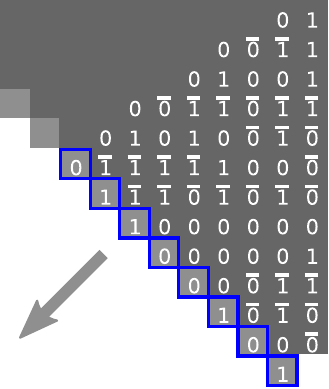}
            \caption[]%
            {{\small The evolution frontier of the \cqca is a 135$^\circ$ diagonal.}}    
        \end{subfigure}
        
\if 1\mode \vspace{-1ex} \fi
        \caption[]
        {Evolution of a column input in the \cqca.  Colours as in Figure~\ref{fig:model}. 
        (a) Portion of $c_\infty[\alpha]$, the limit configuration of the \cqca on column input $\alpha=111211101211$. Successive columns iterate the Collatz process in base $3'$  (Lemma~\ref{lem:cols}). We read: $\itp{\bar 1 0 \bar 1 \bar 1 0 \bar 1 0 0 \bar 1 \bar 1 0 \bar 1} = 408314 = T(272209)$ (blue) and $\itp{\bar 0 0 \bar 0 \bar 0 0 \bar 0 0 0 \bar 0 \bar 0 0 \bar 0} = 204157 = T(408314)$ (orange). 
        (b) Evolution of $c_0[\alpha]$ after $|\alpha|$ \cqca steps, highlighting the ``natural'' frontier of evolution of the \cqca as a 135$^\circ$ diagonal (blue cells). Cells to the north-east of the diagonal are defined, cells to the south-west are undefined and cells on the diagonal are half-defined.  
}\label{fig:colinput}
    \end{figure}
%auto-ignore
% !TEX root = collatz2-main.tex

\if 0\mode
\subsection{Transducer simulation, base 2 and base 3$'$}
\fi

\if 0\mode
The \cqca rule has a local and non-local component.   
The {\em local} component simulates two FSTs:  
the $3x+1$ FST in binary along horizontal rows and the $x/2$ FST in ternary along vertical columns (Figure~\ref{fig:fst}). 
Intuitively, the {\em non-local} component of the \cqca provides the least significant bit to these FSTs, in other words, it runs $x/2$ in binary (removing a trailing 0) and $3x+1$ in ternary (adding a trailing $1$). 
Interestingly, these two FSTs are closely related, we say they are \textit{dual}: states of one are symbols of the other and that arrows of one are read/write instructions of the other. The proof of our main result Theorem~\ref{th:base_conversion}, and of Lemmas~\ref{lem:rows} and~\ref{lem:cols}, use the ability of the \cqca to simulate these 
FSTs simultaneously, one horizontally and the other vertically. 
\fi

\para{Base $2$, $3$ and $3'$} 
Let $\{0,1\}^*$ be the set of finite binary strings, $\{0,1,2\}^*$ be the set of finite ternary strings.
We  index strings from their rightmost symbol, and $|\cdot|$ denotes  string length, meaning we write, for instance, $w = w_{|w|-1}\dots w_1 w_0 \in \{0,1\}^*$. We use the standard interpretation of strings from $\{0,1\}^*$ and $\{0,1,2\}^*$ as, respectively, base 2 and base 3 encodings of natural numbers where the rightmost symbol is the least significant digit. 
We write $\ibin{\cdot}: \{0,1\}^* \to \N$ and $\itr{\cdot}: \{0,1,2\}^* \to \N$ for those interpretations. For instance, $\ibin{110} = \itr{20} = 6$. 

The \cqca uses a particular encoding for base $3$ strings, called base $3'$, over the four-symbol alphabet $\{0,\bar 0, 1, \bar 1\}$. 
The \cqca states $(0,0)$, $(0,1)$, $(1,0)$,  $(1,1)$ $\in S$ respectively represent $0,\bar 0,1,\bar 1$, using a (sum-bit, carry-bit) notation.
The function $\itpt{\cdot} :  \{0,\bar 0, 1, \bar 1\}^\ast\rightarrow \{0,1,2 \}^\ast$  converts  base $3'$ to base $3$ in a straightforward symbol-by-symbol fashion: $0\mapsto 0$, $\bar 0 \mapsto 1$, $1 \mapsto 1$ and $\bar 1 \mapsto 2$. 
For instance, $\itpt{\bar{0}\bar{0}01\bar{1}} = 11012$. We write $\itp{\cdot} = \itr{\itpt{\cdot}}$ as the interpretation of base $3'$ strings as natural numbers. However, because there are two choices for encoding the trit $1$ in base $3'$, converting from base $3$ to base $3'$ requires a definition:

\begin{definition}[Base 3 to \ensuremath{3'} encoding]\label{def:btp}
\normalfont
The function $\ittp{\cdot} : \{0,1,2\}^\ast\rightarrow \{0,\bar 0, 1, \bar 1\}^\ast$ 
encodes a base 3 word $\alpha$ as a base $3'$ word as follows: 
$0$ is encoded as $0$,  
$2$ is encoded as $\bar{1}$, and 
$1$ is encoded as $1$ when the rightmost neighbour different from $1$ is a $2$, and as $\bar{0}$ otherwise. 
E.g., $\ittp{1112 1110 12 11} = 111\bar{1} \bar{0}\bar{0}\bar{0}0 1\bar{1} \bar{0}\bar{0}$.
\end{definition}

\if 0\mode
\begin{remark}\label{rk:btp}
Here we justify the $\ittp{\cdot}$ encoding of Definition~\ref{def:btp}. In limit configurations of the \cqca (Lemma~\ref{lem:lc}), if the state at position $e\in\W$ is $(0,1)$ then, the state at position $e+\Ea$ is $(1,1)$ since any other choice would not propagate a carry to position $e$. Hence, the sum bit of position $e+\So$ is $0+1+1\mod 2 = 0$. As a consequence, in base 3$'$ readings of vertical \cqca columns, the symbol $\bar 0$ can only be followed by another $\bar 0$ or a $0$ and with a similar argument we get that the symbol $1$ can only be followed by another $1$ or a $\bar 1$.
\end{remark}
\fi

\if 0\mode
The proofs of the following two lemmas, which assert the correctness of the $3x+1$ and the $x/2$ FSTs are rather straightforward and left, together with more details about these FSTs, to Appendix~\ref{app:fst}.

\begin{restatable}[The $3x+1$ FST]{lemma}{lemT}\label{lem:fst3x}
\normalfont
Let $x\in\N$ and $w = w_{|w|-1}\dots w_0\in \{0,1\}^*$ the standard binary representation of $x$ (least significant bit is $w_0$). Then, given initial state $\bar 0$ and input $00w$ where bits are read starting at $w_0$ onwards, the $3x+1$ FST outputs, from least significant to most significant bit, $w' \in \{0,1\}^*$ (of length $|w|+2$) such that $\ibin{w'} = 3x+1$.
\end{restatable}

\begin{restatable}[The $x/2$ FST]{lemma}{lemD}\label{lem:fstx2}
\normalfont
Let $x\in\N$ and $\gamma = \gamma_{|\gamma|-1} \dots \gamma_{0} \in \{0,\bar 0,1,\bar 1\}^*$ be the standard base $3'$ represention of $x$ (Definition~\ref{def:btp}), with potential leading $0$s (least significant symbol is $\gamma_0$). Then, given initial state $0$ and input $\gamma$ where symbols are read starting at $\gamma_{|\gamma|-1}$ \emph{downwards}, the $x/2$ FST outputs $\gamma'$, from most significant to least significant symbol, such that: $\itp{\gamma'} = \lfloor \frac{x}{2} \rfloor$. Finally, if the $x/2$ FST ends in state $0$ after reading the symbols of $\gamma$ then $x$ is even and odd otherwise.
\end{restatable}
\fi

\begin{figure}[t]
\centering
\includegraphics[scale=0.8]{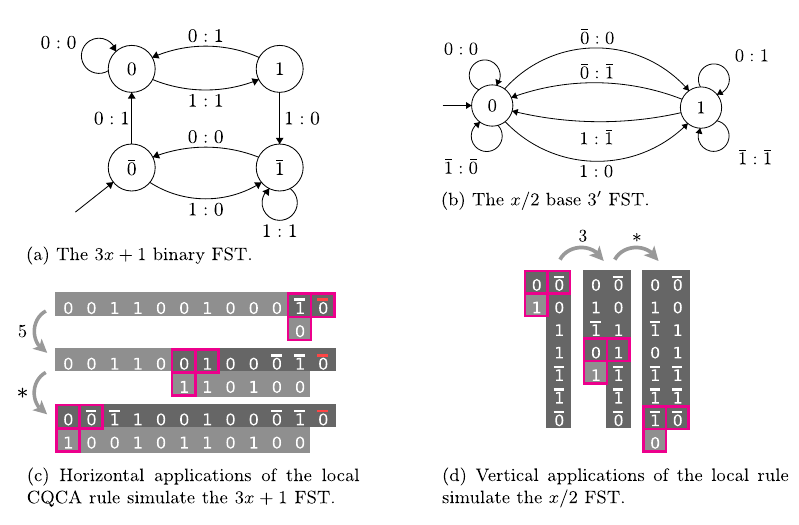} \if 1\mode \vspace{-2ex}\fi
        \caption[]{Panels (a) and (b) present the $3x+1$ binary Finite State Transducer (FST) and the $x/2$ base $3'$ FST. Instructions ``$s_1:s_2$'' mean ``read $s_1$, write $s_2$''. The FSTs are \emph{dual} to one another: states of one are symbols of the other and arrows of one are read/write instructions of the other. 
Panels (c) and (d) show the relation with the local \cqca rule:  (c)~horizontal \cqca applications correspond to simulating the $3x+1$ FST and (d) vertical \cqca applications to simulating the $x/2$ FST. The same colour code as Figure~\ref{fig:model} is used. More precisely, in (c),  on input $\ibin{00110010001} = 401$ and initial state $\bar{0}$ (rightmost $0$ with red carry), we get output $\ibin{10010110100} = 1204 = 3\times 401+1$. In (d) on input $\itp{\bar 0 0 1 1 \bar 1 \bar 1 \bar 0} = 862$ and initial state $0$ (top-left most sum bit $0$), we get output $\itp{01\bar{1}0\bar{1}\bar{1}\bar{1}} = 431 = 862/2$. 
For clarity of exposition, we are ``illegally'' running the \cqca in a vertical (c), or horizontal (d),   sequential mode --- in the legal,  or ``natural'', parallel mode, see Figure~\ref{fig:colinput}(b), more cells would have been computed than are shown. The non-local component of the \cqca rule provides the FSTs with the (red) carry at the least significant digit.}
        \label{fig:fst}
    \end{figure}

\if 1\mode
\para{Transducers and Duality}
The \cqca rule has a local and non-local component.   
The {\em local} component simulates two FSTs:  
the $3x+1$ FST in binary along horizontal rows and the $x/2$ FST in ternary along vertical columns (Figure~\ref{fig:fst}). 
Intuitively, the {\em non-local} component of the \cqca provides the least significant bit to these FSTs, in other words, it runs $x/2$ in binary (removing a trailing 0) and $3x+1$ in ternary (adding a trailing $1$). 
Interestingly, these two FSTs are closely related, we say they are \textit{dual}: states of one are symbols of the other and that arrows of one are read/write instructions of the other. The proof of our main result Theorem~\ref{th:base_conversion}, and of Lemmas~\ref{lem:rows} and~\ref{lem:cols}, use the ability of the \cqca to simulate these 
FSTs simultaneously, one horizontally and the other vertically. 
\fi

\section{Base conversion in the Collatz process}\label{sec:base conversion}

\begin{figure}[t!] 
\centering
\includegraphics[width=1\textwidth]{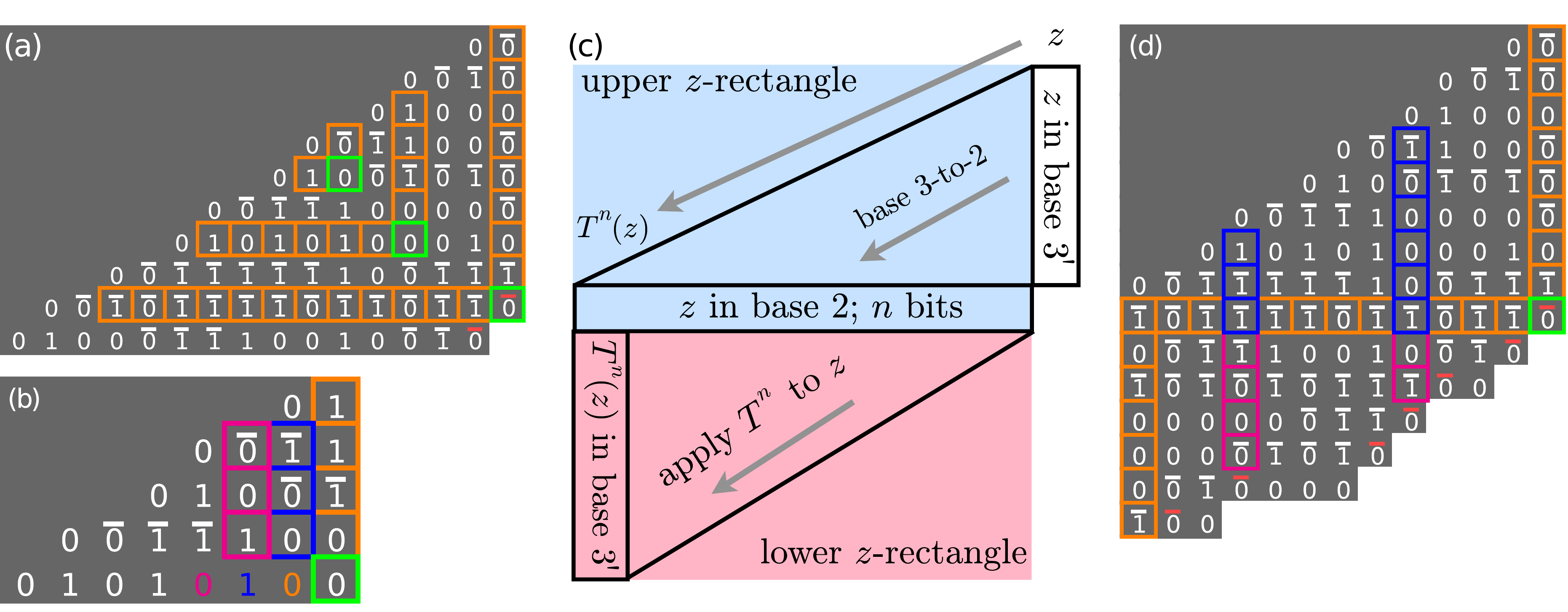}\if 1\mode \vspace{-3ex}\fi
\caption{Structure of the \cqca, and Collatz iterations, implied by our results. 
(a)~Three instances of Theorem~\ref{th:base_conversion}. From the innermost to the outermost they read $\itp{\bar{0}}= \itr{1} = \ibin{1} = 1$, then $\itp{11\bar{1}0} = \itr{1120} = \ibin{101010} = 42$, and finally $\itp{\bar 0 \bar 0 0 \bar 0 \bar 0 \bar 0 0 \bar 1} = \ibin{101111011011} = 3035$. Anchor cells are in green.
(b)~ Zoom-in of base conversion diagram of $\itp{11\bar{1}0}$ (middle example in (a)). Each column,  excluding its bottom cell, represents successive integer divisions by $2$, e.g., 
from orange to magenta columns we read  $\itr{1120} = 42$, $\itr{210} = 21$ and $\itr{101} = 10$. At each step, the sum bits of the bottom row ``store'' the parity information of the previous column that was divided by $2$: the orange bit  gives the parity of the orange column and so on.  %each sum bit of the \cqca is a parity checking bit 
Checking parity in base $3'$ is checking the parity of the number of $1$s (both sum and carry bits), hence outside AC$^0$ (Corollary~\ref{lem:parityCheck},  
 Theorem~\ref{thm:not in AC0}).
(c)~For any input $z \in \N$ the schematic shows   $n= \lceil \log_2 x \rceil$ iterations of the Collatz map $T$. 
The values  $x, T(x), T^2(x), \ldots, T^n(x)$  appear as $n$ columns, written in ternary (base~$3'$). 
The configuration is cut by a horizontal line, whose cells encode $z$ in binary (Theorem~\ref{th:base_conversion}). 
The `base conversion' upper $z$-rectangle is simple to define in terms of $z$, and is computable in NC$^1$. 
The lower $z$-rectangle embeds the full complexity of computing $n$ iterations of $T$, but with  input being in base~2 and output in base~3$'$. 
(d)~The influence of $z = \itp{\bar 0 \bar 0 0 \bar 0 \bar 0 \bar 0 0 \bar 1} = \itr{11011102} = 3035$ on its next $n = \lceil \log_2 z \rceil$ Collatz iterates. The column in orange on the right is the base $3'$ encoding of~$z$. By Lemma~\ref{lem:cols}, the column in orange on the left is the base $3'$ encoding of $T^n(z) = T^{12}(z) = \itr{2020002} = 1622$. By Theorem~\ref{th:base_conversion}, the orange horizontal sum bits give the base 2 representation of $z$. Hence the cells outlined in blue (upper $z$-rectangle) in $T^4(z)$ (second outlined column to the right) and $T^9(z)$ (third outlined column), are entirely determined by the base $3'$ to base 2 conversion diagram of $z$, only the cells outlined in magenta (lower $z$-rectangle) are independent from the base conversion algorithm.}
\label{fig:influence}
\label{fig:cqca_idea}
\if 1\mode \vspace{-3ex} \fi
\end{figure}

\begin{definition}[Binary initial configuration \ensuremath{c_0[w]}]\label{def:icw}
\normalfont
For any binary input $w\in\{0,1\}^*$, we define $c_0[w]\in S^\W$ to be the initial configuration of the \cqca\ with $w$ written on the horizontal ray $y=0, x < 0 $ as follows: 
for $1 \leq i \leq |w|$ we set $c_0[w](-i,0) = (w_{i-1},\bot)$, 
for $i > |w|$ we set $c_0[w](-i,0) = (0,\bot)$
and for all other positions $(x,y)\in\W$ we set $c_0[w](x,y) = (\bot,\bot)$. 
\end{definition}

\begin{definition}[Ternary  initial configuration \ensuremath{c_0[\alpha]}]\label{def:ica}
\normalfont
For any ternary input $\alpha\in\{0,1,2\}^*$ we define $c_0[\alpha]\in S^\W$ to be the initial configuration of the \cqca 
with $\alpha' = \ittp{\alpha}$ (Definition~\ref{def:btp}) written on the vertical ray $x=0, y > 0$ as follows: 
for $1 \leq i \leq |\alpha|$ 
we set $c_0[\alpha](0,i) = \state(\alpha'_{i-1})$ where~$\state: \{0,\bar 0, 1, \bar 1\} \to S$  
is such that $\state(0) = (0,0)$, $\state(\bar 0) = (0,1)$, $\state(1) = (1,0)$ and $\state(\bar 1) = (1,1)$. 
Also, for all $x < 0$ 
we set $c_0[\alpha](x,|\alpha|) = (0,\bot)$ 
and finally at all other positions we set $c_0[\alpha](x,y) = (\bot,\bot)$.
\end{definition}

Each initial configuration has a well-defined unique limit configuration: 
\begin{lemma}[Limit configurations \ensuremath{c_\infty[w]} and \ensuremath{c_\infty[\alpha]}]\label{lem:lc}
\normalfont
Let $w\in\{0,1\}^*$ be a finite binary string and $\alpha\in\{0,1,2\}^*$ be a finite ternary string. 
Then, in the \cqca evolution from the initial configurations $c_0[w] \in S^\W$, and $c_0[\alpha]  \in S^\W$, both sum and carry bits in any cells are final: if set, they never change.
Hence, limit configurations $c_\infty[w] = \Lim{i \to \infty} F^i(c_0[w])$ and $c_\infty[\alpha] = \Lim{i \to \infty} F^i(c_0[\alpha])$ exist.\footnote{Furthermore, although the following fact is not used in the rest of this paper, one can  prove that limit configurations contain no half-defined cells: each cell is either defined or undefined.}
\end{lemma}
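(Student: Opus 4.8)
The plan is to show that in the \cqca evolution from $c_0[w]$ (the binary case; the ternary case $c_0[\alpha]$ is symmetric), every cell that ever acquires a defined sum bit or a defined carry bit retains that value forever, and then to conclude existence of the pointwise limit. The key structural observation is that the local rule, inspected case-by-case, is \emph{freezing}: a cell in an \undef\ state $(\bot,\bot)$ can only move to a \bdef\ state $(s,\bot)$ (via the Forward deduction case), a cell in a \bdef\ state $(s,\bot)$ can only move to a \fdef\ state $(s,c)$ with the \emph{same} sum bit $s$ (via the Carry propagation case), and a \fdef\ cell is never changed. In particular the sum bit, once set, is immutable under the local rule, and the carry bit is only ever written once (when passing from \bdef\ to \fdef). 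The non-local rule similarly only ever turns a cell with $s\in\{0,\bot\}$, $c=\bot$ into $(0,1)$, or a cell $(0,\bot)$ into $(0,0)$ — again never altering an already-defined sum bit and never overwriting an already-defined carry. So along the partial order \undef\ $<$ \bdef\ $<$ \fdef\ (refined so that the \bdef\ value $s$ and the \fdef\ carry are recorded), each cell's state is non-decreasing and can change at most twice.

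First I would make the per-cell monotonicity precise: define a rank $r:\celltype\to\{0,1,2\}$ by $r(\bot,\bot)=0$, $r(s,\bot)=1$, $r(s,c)=2$ and check from the explicit definitions of $f_\text{local}$ and $f_\text{non-local}$ that applying either rule never strictly decreases the rank of a cell, and that whenever a cell has rank $\geq 1$ its sum bit is preserved and whenever it has rank $2$ its carry bit is preserved. This is a finite verification over the cases listed in the \cqca definition. Since an update step applies the non-local rule then the local rule, composing these two monotone updates gives that $i\mapsto r(c_i(v))$ is non-decreasing for every fixed $v\in\W$, bounded above by $2$, hence eventually constant; and once it is constant the sum bit and carry bit of $c_i(v)$ are constant as well. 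Therefore for each $v$ the sequence $(c_i(v))_{i\in\N}$ is eventually constant, so the pointwise limit $c_\infty[w](v) := \lim_{i\to\infty} c_i(v)$ is well-defined, and likewise $c_\infty[\alpha]$. Writing this as $\Lim{i\to\infty} F^i(c_0[w])$ is just the statement that $F^i(c_0[w])$ converges coordinatewise.

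The one genuinely non-routine point — and the step I expect to be the main obstacle — is verifying that the two rules never \emph{conflict} in a way that would force a rank decrease or a sum-bit/carry-bit flip: e.g. one must check that the Forward deduction case of the local rule only fires on a cell that is currently \undef\ (so it cannot clobber an already-computed bit), that the carry deposited "on the cell to the north" by a Forward deduction coincides with, rather than contradicts, any carry that cell might receive from its own Carry-propagation step, and that the non-local Bootstrapping rule's precondition ($s_\text{center}\in\{0,\bot\}$ with carry $\bot$, all eastward cells in $\{0,\bot\}$ with carry $\bot$) is incompatible with that cell already being \fdef. These consistency checks are exactly where the careful design of the \cqca states (omitting $(\bot,0)$ and $(\bot,1)$ from $\celltype$, so "carry defined $\Rightarrow$ sum defined") does the work, and they are what make the freezing argument go through. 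The footnote's stronger claim — that $c_\infty$ has no \bdef\ cells — would additionally require arguing that every cell that ever becomes \bdef\ eventually gets its carry filled in, which for inputs from $I$ follows from the fact that the defined region grows to fill the relevant quadrant; but as noted this is not needed for the lemma as stated, so I would omit it or relegate it to a remark.
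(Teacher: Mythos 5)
Your proposal is correct and takes essentially the same route as the paper: the paper's proof is exactly the freezing argument via the partial order $(\bot,\bot) < (s,\bot) < (s,c)$, with each cell changing state at most twice and the limit taken pointwise. Your version is simply a more detailed, case-checked rendering of the same idea (the paper asserts the monotonicity of the rules without spelling out the consistency checks you flag).
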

\begin{proof}
Both the local and non-local rules of the \cqca can only be applied either on undefined cells or half-defined cells. Furthermore, if applied on an undefined cell, the cell becomes half-defined or defined, and when applied on a half-defined cell it becomes defined. Hence, the following partial order $(\bot,\bot) < (s,\bot) < (s,c)$ with $s,c \in \{0,1\}$ holds on the states of the \cqca: it has the freezing property~\cite{goles2015freezing,vollmar1981freezing} and cells can change state at most twice. Limit configurations are well-defined by taking the final state of each cell.
\end{proof}

\begin{example}
Figure~\ref{fig:model}(c) (top) and (d) respectively show $c_0[w]$ and a portion of $c_\infty[w]$ for $w=1100011$. Figure~\ref{fig:colinput}(a)  shows a portion of $c_\infty[\alpha]$ for $\alpha=111211101211$;  by Definition~\ref{def:btp} we have $\ittp{1112 1110 12 11} = 111\bar{1} \bar{0}\bar{0}\bar{0}0 1\bar{1} \bar{0}\bar{0}$.
\end{example}

Next, we define how to read base 2 strings on rows and base $3'$ on columns of the \cqca: 

\begin{definition}[Mapping rows and columns to natural numbers]\label{def:read}
\normalfont
Let $x_0,y_0 \leq 0$ and $c\in S^\W$ be a configuration. 
A finite segment of defined cells along a horizontal row $x_0$ of $c$ is said to {\em give} word $w \in\{0,1\}^*$ if $w$ is exactly the concatenation of the sum bits in these cells (LSB on right). 
An infinite horizontal row $y_0$ of $c$ is said to \emph{give}  $w\in\{0,1\}^\ast$ 
 if there is a $k\geq 0$ such that the defined sum bits on $y_0$ are $0^\infty w  0^k $.
 
A contiguous segment of defined cells on the column of $c$ with $x$-coordinate $x_0$  is said to \emph{give} the base $3'$ word $q\in\{0,\bar 0,1, \bar 1\}^\ast$ if $q$ is exactly the concatenation of the base $3'$ symbols corresponding to each cell's state\footnote{The \cqca defined states are $(0,0),(0,1),(1,0),(1,1)$ and they respectively map to base $3'$ symbols $0,\bar 0,1, \bar 1$.}   (where the southmost cell gives the least significant trit).
\end{definition}

Intuitively, the following lemma says that rows of \cqca\ encode odd Collatz iterates in binary. There is one odd iterate per row, Figure~\ref{fig:model}(c).

\begin{lemma}[Rows simulate Collatz in base 2]\label{lem:rows}
\normalfont 
Let $z\in\N^+$ and let 
$w\in\{0,1\}^* \setminus \{ 0\}^*$ 
be the standard base $2$ representation of $z$ 
and, by Lemma~\ref{lem:lc}, let $c_\infty[w]\in S^\W$ be the \cqca limit configuration on input $w$. 
Then, the horizontal row $y_0\leq 0$ of $c_\infty[w]$ gives the base 2 representation of the ${|y_0|}^\text{th}$ odd term in the Collatz sequence of $z$ (Definition~\ref{def:read}).
\end{lemma}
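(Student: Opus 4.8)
The plan is to proceed by induction on $|y_0|$, the number of rows below the input, using the duality between the two FSTs and the fact that each horizontal row is a run of the $3x+1$ FST sandwiched between appropriate applications of the non-local ``$+1$'' and binary-halving bootstrap steps. First I would establish the base case: the input row $y=0$ gives $w$, the base~$2$ representation of $z$ itself, which is the $0^\text{th}$ odd term of the Collatz sequence of $z$ (note $z\in\N^+$ is odd since $w\notin\{0\}^*$ and $w$ is the \emph{standard} representation, but actually we only need that the sequence of odd iterates starts at the first odd value $\geq z$; since the statement indexes by $|y_0|$ with $y_0\leq 0$ and calls row $0$ the ``$0$th odd term'', I would adopt the convention that if $z$ is odd it is its own $0$th odd iterate, and handle the parity bookkeeping carefully).

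The inductive step is the heart of the argument. Assume horizontal row $y_0$ of $c_\infty[w]$ gives $u\in\{0,1\}^*$, the base~$2$ representation of some odd Collatz iterate $z' = T_1^{k}(z)$ where $k=|y_0|$. I would then trace how the \cqca fills in the cells strictly below row $y_0$. By Lemma~\ref{lem:lc} the limit configuration is well-defined, so it suffices to identify what the defined sum bits on row $y_0-1$ must be. The non-local rule fires exactly once on the (half-defined) row $y_0$: it places a carry bit $1$ immediately to the right of the unique sum-bit-$1$ cell that is the least significant bit of $u$ — this is the ``$+1$'' of $3x+1$. Then the local rule (Forward deduction, case 2 of $f_\text{local}$), reading north/north-east cells, propagates left along row $y_0-1$, and this is precisely a run of the $3x+1$ FST of Figure~\ref{fig:fst}(c) with initial state $\bar 0$ on input $u$ padded with two leading zeros. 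By Lemma~\ref{lem:fst3x} (the $3x+1$ FST), row $y_0-1$ therefore gives the base~$2$ representation of $3z'+1$. But $3z'+1$ is even, so its binary representation ends in one or more zeros; stripping the trailing zeros — which the non-local halving rule (case~2 of $f_\text{non-local}$, which erases a trailing $0$ when there is a $1$ further left) together with carry-propagation accomplishes as we move to subsequent rows — corresponds to applying $T_0$ repeatedly until the value is odd again. Thus row $y_0-1$, once all its cells are final, gives $(3z'+1)/2^j$ for the appropriate $j\geq 1$, which is exactly $T_1^{k+1}(z)$, the $(k+1)$th odd iterate. This closes the induction.

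The main obstacle I expect is \emph{synchronisation and bookkeeping}: making precise the claim that ``row $y_0-1$ is a run of the $3x+1$ FST'' when the \cqca actually evolves along a $135^\circ$ diagonal frontier (Figure~\ref{fig:colinput}(b)), not row-by-row. Cells on row $y_0-1$ become defined at different times, interleaved with cells on many other rows, and the non-local rule can only touch one row per step — so I must argue that the \emph{order} in which cells are filled does not affect the \emph{limit} configuration (this is where Lemma~\ref{lem:lc}'s freezing property is essential), and that the local rule, applied wherever possible, reproduces exactly the FST transition table of Figure~\ref{fig:fst}(a) once the north and north-east neighbours are fixed to the values dictated by row $y_0$ and the freshly-inserted carry. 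A secondary subtlety is the leading-zero region: I must check that the infinite tail of $0$-cells to the west of $u$ on row $y_0$ (guaranteed by Definition~\ref{def:icw} and preserved inductively) furnishes the ``$00$'' padding required by Lemma~\ref{lem:fst3x}, so that the most significant bits of $3z'+1$ are produced correctly and the row indeed ``gives'' a word of the form $0^\infty w' 0^k$ in the sense of Definition~\ref{def:read}. Once these two points are nailed down, the rest is a direct appeal to Lemma~\ref{lem:fst3x} and the definition of the Collatz map.
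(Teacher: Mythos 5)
Your proposal is correct and follows essentially the same route as the paper's proof: induction down the rows, the non-local rule as the ``$+1$'' placed east of the least significant $1$, the local rule along the new row as a run of the $3x+1$ FST from state $\bar 0$ (Lemma~\arxivLemThreeX), and the trailing zeros of $3z'+1$ accounting for the intervening $T_0$ steps and discarded by Definition~\ref{def:read}. The only slip is in your base-case parenthetical: for even $z$ the first odd term is $z/2^{m}$ (with $2^m$ the largest power of two dividing $z$), not ``the first odd value $\geq z$'' --- but since a row \emph{gives} a word up to trailing zeros, row $y=0$ automatically gives that odd part, exactly as in the paper.
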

\begin{proof}
Note that it is enough to show that row $y=-1$ of $c_\infty[w]$ gives the base $2$ representation of the second odd term in the Collatz sequence of $z$ and then inductively apply the argument to all rows $y < -1$ to get the result. 
We have $w\neq 0^n$, hence there is at least one $1$ in $w$ and so, on row $y=0$ of $c_\infty[w]$, the non-local rule of the \cqca (Figure~\ref{fig:model}(a)) was applied exactly once, at position $x_0 = \max(\{x < 0 \mid c_\infty[w](x,0) = (s,c) \text{ with } s = 1\}) + 1$ and we have $c_\infty[w](x_0,0) = (0,1)$. Sum bits on row $y=0$ up to column $x_0-1$ give the representation of $z' \in \N$ (Definition~\ref{def:read}), the first odd term in the Collatz sequence of $z = \ibin{w}$.

From position $x=x_0-1$ down to $x_0 = -\infty$, the local rule of the \cqca was applied to produce carries on row $y=0$ and sum bits on row $y=-1$. Given the definition of the \cqca local rule, that computation corresponds exactly to applying the binary $3x+1$ FST (Figure~\ref{fig:fst}(a) and (c)): input is read in the sum bits of row $y=0$ from $x = x_0-1$ down to $x = -\infty$, output is produced in the sum bits of row $y=-1$, the initial state is $\bar 0$, corresponding to $c_\infty[w](x_0,0) = (0,1)$.  Hence, \if 0\mode by Lemma~\ref{lem:fst3x},\else by Lemma~\arxivLemThreeX in the full version of this paper \cite{Collatz2arxiv},\fi \ which asserts the correctness of the $3x+1$ FST computation, sum bits of row $y=-1$ from $x=-\infty$ to $x=x_0-1$ give the binary representation of $3z'+1$; starting with a $(0)^\infty$ prefix and with LSB to the east. By ignoring the $m \geq 1$ trailing zeros on row $y=-1$ (there is at least one because $3z'+1$ is even), we get the representation of $(3z'+1)/2^m$, the second odd iterate in the Collatz sequence of $z$. Note the non $(0)^\infty$ part of row $y=-1$ is produced in  $\leq |w|+2$ 
steps in the \cqca.
\end{proof}

\begin{remark}\label{rk:row1}
Although Lemma~\ref{lem:rows} only deals with odd Collatz iterations, even Collatz iterations also naturally appear in the \cqca: even terms occurring between the $i^\text{th}$ and the $(i+1)^\text{th}$ odd Collatz iterates correspond to the trailing $0$s on row $y=-(i+1)$ of the \cqca.
For example, on the third row in Figure~\ref{fig:model}(c) (bottom) we read $7 = \ibin{111}$ but also, in the trailing $0s$, all $2^n \cdot 7$ for $1 \leq n \leq 6$.
\end{remark}

\begin{remark}\label{rk:z2}
\if 1\mode
The Collatz process can be generalised to an uncountable class of numbers that includes both $\N$ and $\Z$: $\Z_2$, the ring of 2-adic integers which syntactically corresponds to the set of infinite binary words \cite{10.2307/2322189,caruso:hal-01444183}. Given that generalisation, the Collatz process can be run on more exotic numbers such as $-\frac{4}{23} \in \Z_2 \cap \Q$ and the Collatz conjecture generalises as follows: all rational 2-adic integers eventually reach a cycle\footnote{The set $\Z_2\cap \Q$ exactly corresponds to irreducible fractions with odd denominator and parity is given by the parity of the numerator. For instance, the first Collatz steps of $-\frac{4}{23}$ are: $(-\frac{4}{23},-\frac{2}{23},-\frac{1}{23},\frac{10}{23},\frac{5}{23},\,\dots)$. It reaches the cycle $(\frac{5}{23},\frac{19}{23},\frac{40}{23},\frac{20}{23},\frac{10}{23},\frac{5}{23}$ \dots).\\} (known as Lagarias' Periodicity Conjecture~\cite{10.2307/2322189}). When running the $3x+1$ FST on infinite binary inputs, one can show that it is computing the $3x+1$ function on 2-adic integers (see\if 1\mode~\cite{Collatz2arxiv},\fi\ Appendix~B). Hence, Lemma~\ref{lem:rows} and the \cqca framework in general, can be generalised for working with infinite binary inputs and the Collatz process in $\Z_2$.
\else
When running the $3x+1$ FST on infinite binary inputs, one can show that it is computing the $3x+1$ function on 2-adic integers (see\if 1\mode~\cite{Collatz2arxiv},\fi\ Remark~\ref{rk:z2}, Appendix~B). Hence, Lemma~\ref{lem:rows} and the \cqca framework in general, can be generalised for working with infinite binary inputs and the Collatz process in $\Z_2$.
\fi
\end{remark}

Intuitively, the following lemma says that columns of the \cqca\ encode all Collatz iterates, even and odd, in ternary, as in Figures~\ref{fig:model}(e) and~\ref{fig:colinput}(a). 

\begin{restatable}[Columns simulate Collatz in base $3'$]{lemma}{lemcols}\label{lem:cols}
\normalfont
Let $z \in\N^+$ and let $\alpha\in\{0,1,2\}^* \setminus \{ 0\}^*$ be the standard base 3 representation of $z$, 
and, by Lemma~\ref{lem:lc}, let $c_\infty[\alpha]\in S^\W$ be the \cqca limit configuration on input~$\alpha$. Then, 
the vertical column $x_0 < 0$ 
in $c_\infty[\alpha]$ gives 
the base~$3'$ representation of $T^{|x_0|}(z)$ (as  the defined cells down to, and excluding, the southmost cell with sum bit $0$, Definition~\ref{def:read}).
\end{restatable}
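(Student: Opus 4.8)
The plan is to mirror the structure of the proof of Lemma~\ref{lem:rows}, but rotated: where rows of the \cqca simulate the binary $3x+1$ FST, columns simulate the base~$3'$ $x/2$ FST (Figure~\ref{fig:fst}(b),(d)), and where the non-local rule supplied the trailing $1$ of $3x+1$ in the row case, here the encoding of the ternary input $\alpha$ via $\ittp{\cdot}$ (Definition~\ref{def:btp}) already provides the base~$3'$ representation of $z$ directly on column $x=0$. As with Lemma~\ref{lem:rows}, it suffices to show that the column $x_0=-1$ of $c_\infty[\alpha]$ gives the base~$3'$ representation of $T(z)$, since one can then apply the argument inductively to columns $x_0<-1$; the base case is that column $x=0$ gives the base~$3'$ representation of $z = T^0(z)$, which holds by Definition~\ref{def:ica} together with the fact (justified in Remark~\ref{rk:btp}) that $\ittp{\alpha}$ is a legal base~$3'$ word that $\itp{\cdot}$ reads back as $z$.

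First I would identify, on column $x=0$ of the limit configuration, the contiguous block of defined cells from row $y=|\alpha|$ down to row $y=1$: by Definition~\ref{def:ica} these are precisely $\state(\alpha'_{|\alpha|-1}),\dots,\state(\alpha'_0)$ where $\alpha'=\ittp{\alpha}$, plus the sentinel cell $(0,\bot)$ at $(0,|\alpha|)$ on the top, and the half-defined region below. Then I would observe that the local rule, applied repeatedly going downward along column $x=-1$ (reading north-to-south, i.e.\ most significant trit first), is exactly the $x/2$ FST of Figure~\ref{fig:fst}(d): the (sum,carry) pair at $(0,y)$ is the symbol being read, the state of the FST is carried in the cell at $(-1,y+1)$ that was just written, and the cell written at $(-1,y)$ is determined by $f_\text{local}$ in the "forward deduction" and "carry propagation" cases. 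The initial FST state is $0$, which corresponds to the sentinel $(0,\bot)$-style boundary cell that Definition~\ref{def:ica} places at $(x,|\alpha|)$ for $x<0$. Invoking Lemma~\ref{lem:fstx2} (the correctness of the $x/2$ FST; in the LLNCS version, Lemma~\arxivLemTwoX of \cite{Collatz2arxiv}), the symbols written down column $x=-1$, read from the sentinel down to just above the cell carrying the FST's final state, give the base~$3'$ representation of $\lfloor z/2 \rfloor$. Finally, Lemma~\ref{lem:fstx2} also tells us that the FST ends in state $0$ iff $z$ is even, i.e.\ iff $z = T(z)\cdot 2$ exactly (so $\lfloor z/2\rfloor = T(z)$), and ends in state $1$ (symbol-wise, a carry) iff $z$ is odd. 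In the odd case I need the extra $+1$: here I would check that the non-local "bootstrapping" rule fires on column $x=-1$ — producing, \textit{ex nihilo}, a carry in the cell just below the last FST symbol — which converts $\lfloor z/2\rfloor = (z-1)/2$ into $(3z+1)/2$ via the same mechanism as in Lemma~\ref{lem:rows}, but now the FST computing $3x+1$ runs horizontally along that new row while the vertical reading picks up the corrected value; alternatively, and more cleanly, I would note that the "southmost cell with sum bit $0$" clause in the statement is exactly designed to strip the trailing structure so that what remains on column $x=-1$ reads as $T(z)$ regardless of parity. I would make this bookkeeping precise by tracking, for the odd case, that the cell at $(-1, j)$ where $j$ is one below the last read symbol receives the bootstrapped carry and that carry/sum propagation then realises $3\lfloor z/2\rfloor + 2 = (3z+1)/2$ on the remaining cells of column $x=-1$.

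The main obstacle I anticipate is the careful alignment of three coordinate systems at once: (i) the FST's sequential read order (north-to-south on column $x=0$, with the FST state threaded through diagonal neighbours), (ii) the genuinely parallel/synchronous \cqca update with its "natural" $135^\circ$ frontier (Figure~\ref{fig:colinput}(b)), so that I must argue the cells I care about along column $x=-1$ are in fact eventually computed and never revised (which is where Lemma~\ref{lem:lc} is essential), and (iii) the interaction between the \emph{vertical} $x/2$ FST on column $x=-1$ and the \emph{horizontal} $3x+1$ FST that is simultaneously running on the rows crossing it — the duality of Figure~\ref{fig:fst} means these are consistent, but verifying that the bootstrapped carry in the odd case is placed and propagated correctly, and that the "southmost sum-bit-$0$ cell" cutoff exactly cancels the $2^m$ factor (just as the trailing-zeros remark does for rows, Remark~\ref{rk:row1}), is the fiddly heart of the argument. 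A secondary subtlety is the boundary behaviour at the top: one must confirm that the sentinel row of $(0,\bot)$ cells for $x<0$ at height $y=|\alpha|$ plays the role of the FST's initial state $0$ for every column, not just $x=-1$, which is what makes the induction on $x_0$ go through uniformly.
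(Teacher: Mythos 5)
Your proposal follows essentially the same route as the paper's proof: reduce to column $x_0=-1$, identify the downward application of the local rule along that column with a run of the $x/2$ FST reading column $x=0$ from initial state $0$ (the sentinel row at $y=|\alpha|$), invoke the FST-correctness lemma, and split on the parity of $z$, with the odd case handled by the non-local rule plus a final carry that appends $\bar 1$ so that the column reads $3\cdot\frac{z-1}{2}+2 = T(z)$. One mechanical correction for the odd case: the non-local rule does not place the \textit{ex nihilo} carry on column $x=-1$; it fires on row $y=0$ at position $(0,0)$, immediately east of the half-defined cell $(-1,0)$ whose sum bit $1$ records the FST's final state, giving $(0,0)$ the state $(0,1)$ --- i.e.\ appending the trit $1$ to column $x=0$, the trivial ternary $3x+1$ --- and only then does the local carry-propagation rule deliver a carry to $(-1,0)$, turning it into $(1,1)=\bar 1$. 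Also, your ``cleaner alternative'' --- that the southmost-sum-bit-$0$ clause by itself strips the trailing structure and lets the parity cases collapse --- does not work: in the odd case the extra cell at $(-1,0)$ is part of the base $3'$ representation of $T(z)$ and is \emph{not} excluded by that clause, which merely marks where the representation ends, so the explicit $+2$ bookkeeping you give at the end of that paragraph is the argument you actually need.
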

\begin{proof}
Note that it is enough to show that column $x_0 = -1$ of $c_\infty[\alpha]$ gives the base $3'$ representation of $T(z)$, with $z=\itr{\alpha}$, and then inductively apply the argument to all columns $x < -1$ to get the result. By construction of $c_0[\alpha]$ (Definition~\ref{def:ica}) and because all bits are final (Lemma~\ref{lem:lc}) we have that the sum bit of $c_\infty[\alpha](-1,|\alpha|)$ is $0$ (more generally, for all $x<0$ we have the sum bit of $c_\infty[\alpha](x,|\alpha|)$ is 0). From there, the local \cqca rule (Figure~\ref{fig:model}) is applied to each position $(-1,y)$ with $1 \leq y \leq |\alpha|$. This application of the rule corresponds exactly to running the base $3'$ $x/2$ FST (Figure~\ref{fig:fst}(b) and (d)) by reading the input on the base $3'$ symbols of cells of column $x=0$ (both sum and carry bits of these cells are defined in $c_0[\alpha]$), writing the base~$3'$ output on the cells of column $x=-1$ starting from initial FST state 0 (corresponding to the sum bit of $c_\infty[\alpha](-1,|\alpha|)$ being $0$). In that interpretation, the sum bit output to the south of a cell by the local \cqca rule corresponds to the new FST state after reading the east base $3'$ symbol, hence the sum bit $s$ of cell $(-1,0)$ corresponds to the state of the $x/2$ FST after reading all base $3'$ symbols of $\ittp{\alpha}$ (the least significant digit is at position $(0,1)$). Two cases, with $z= \itr{\alpha}$:
\begin{enumerate}
\item If $z \equiv 0 \mod 2$, \if 0\mode by Lemma~\ref{lem:fstx2},\else by Lemma~\arxivLemTwoX in the full version of this paper \cite{Collatz2arxiv},\fi \ we have that the final state of the $x/2$ FST after reading $\ittp{\alpha}$ is $0$ and that the output word $\alpha' \in \{0,\bar 0,1,\bar 1\}$ is such that $\itp{\alpha'} = \itr{\alpha}/2 = z/2$. Hence, we deduce that column $x=-1$, from $y=|\alpha|$ down to $y=1$, gives the base $3'$ representation of $T(z) = z/2$ which is what we wanted. 

\item If $z  \equiv 1 \mod 2$, \if 0\mode by Lemma~\ref{lem:fstx2},\else by Lemma~\arxivLemTwoX of \cite{Collatz2arxiv},\fi \ we have that the final state of the $x/2$ FST after reading $\ittp{\alpha}$ is $1$ and that the output word $\gamma \in \{0,\bar 0,1,\bar 1\}$, which is written on column $x=1$, $y=|\alpha|$ down to $y=1$, is such that $\itp{\gamma} = (\itr{\alpha}-1)/2 = (z-1)/2$. Furthermore, for $e=(0,0)\in \W$, the sum bit of $c_\infty[\alpha](e+\We)$, which corresponds to the final state of the $x/2$ FST, is $s=1$ and $c_0[\alpha](e) = (\bot,\bot)$. Hence, the non-local rule will be applied at position $e$ giving $c_0[\infty](e) = (0,1) = (s_e,c_e)$.
Then, at the following \cqca step, since $s_e + c_e + s =  0+1+1\geq 2$ we get a carry bit of 1 at (e+\We), i.e. $c_0[\infty](e+\We) = (1,1)$. It means that on column $x=-1$, with cell at position $(-1,0)=e+\We$ we add the base $3'$ symbol $\bar{1}$ on the least significant side of $\gamma$ (word $\gamma$ was output by the FST to the north of position $(-1,0)$). Hence, column $x=-1$, from row $y=|\alpha|$ down to $y=0$, interprets as: $3\cdot\itp{\gamma}+2 = 3\frac{z-1}{2} + 2 = \frac{3z+1}{2} = T(z)$. Which is what we wanted.
\end{enumerate}
Hence we get that column $x=-1$ gives the base $3'$ representation of $T(z)$. From the above points, it is immediate that the  cell directly below the base $3'$ expression of $T(z)$ on column $x=-1$ has sum bit equal to $0$ and that all cells below are undefined, giving the end of the Lemma statement. Note that it requires at most $|\alpha|+2$ simulation steps for the \cqca to compute that representation (at most two extra cells to the south are used).\end{proof}

We now prove our main result: the natural number written in base $3'$ on a column is converted to base 2 on the row directly south-west to it, see Figure~\ref{fig:cqca_idea}. 

\begin{theorem}[Base 3-to-2 conversion]\label{th:base_conversion}
\normalfont
Let $\alpha\in\{0,1,2\}^*$ be a finite ternary string. By Lemma~\ref{lem:lc} let $c_\infty[\alpha]\in S^\W$ be  the \cqca limit configuration on input~$\alpha$. 

Then, in $c_\infty[\alpha]$, any position $e=(x_0,y_0)\in\W$ such that both cells $e+\No$ and $e+\We$ are defined, is called an \emph{anchor cell} and has the \textit{base conversion property}:
there exists $z\in\N$ such that
the defined cells on column $x_0$ strictly to the north of $e$ give (Definition~\ref{def:read}) the base $3'$ representation of $z$ 
 and the cells on row $y=y_0$ strictly to the west of $e$ give the base $2$ representation of $z$.

 \end{theorem}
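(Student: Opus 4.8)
The plan is to show that the part of $c_\infty[\alpha]$ lying north-west of an anchor cell is exactly a base-$3$-to-$2$ conversion tableau: reading westwards, each column is the next quotient in the repeated halving of $z$, while the row through the anchor cell collects the successive remainders, which are the binary digits of $z$ from least to most significant. Fix an anchor cell $e=(x_0,y_0)$. Since $e+\No$ is defined and, by the local rule, every column is filled from the top row $y=|\alpha|$ downwards, the cells of column $x_0$ at rows $y_0+1,\dots,|\alpha|$ form a contiguous block of defined cells (by Lemma~\ref{lem:cols} this block is, up to a trailing $0$-cell, the base~$3'$ encoding of the Collatz iterate sitting on column $x_0$); let $z$ be the natural number this block gives in base~$3'$.

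The technical core is a ``prefix'' version of the transducer simulation from the proof of Lemma~\ref{lem:cols}. Running the $x/2$ FST (Figure~\ref{fig:fst}) on just the top $|\alpha|-y_0$ symbols of a column shows: if column $x$ is defined at rows $y_0+1,\dots,|\alpha|$ with base~$3'$ value $v$, then its cells at those rows feed, via the local rule, a partial run of the FST whose first $|\alpha|-y_0$ output symbols are written on column $x-1$ at rows $y_0+1,\dots,|\alpha|$ and form a base~$3'$ representation of $\lfloor v/2\rfloor$ (these output symbols depend only on the top $|\alpha|-y_0$ input symbols, by the division-with-remainder invariant underlying Lemma~\ref{lem:fstx2}), and whose state after this partial run, which is the running remainder $v\bmod 2$, becomes the sum bit of cell $(x-1,y_0)$. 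Moreover, if in addition $(x,y_0)$ is defined then $(x-1,y_0)$ also acquires a carry bit (produced when the FST reads $(x,y_0)$, or equivalently by carry propagation into $(x-1,y_0)$ from $(x,y_0)$), so it is defined. Applying this to column $x_0$ gives that column $x_0-1$ is defined at rows $y_0+1,\dots,|\alpha|$ and that $(x_0-1,y_0)$ has sum bit $z\bmod 2$; since $e+\We=(x_0-1,y_0)$ is assumed defined, column $x_0-1$ is in fact defined at all rows $y_0,\dots,|\alpha|$, and an induction on $j\ge 1$ then shows the same for every column $x_0-j$.

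It remains to run the induction westwards. Writing $z_j=\lfloor z/2^j\rfloor$ and applying the prefix simulation repeatedly, the cells of column $x_0-j$ at rows $y_0+1,\dots,|\alpha|$ give a base~$3'$ representation of $z_j$ (for $j=0$ this is the block strictly north of $e$, which therefore gives $z$), and the sum bit of $(x_0-j-1,y_0)$ equals $z_j\bmod 2=\lfloor z/2^j\rfloor\bmod 2$, i.e.\ the binary digit of $z$ of weight $2^j$. Hence the defined cells of row $y_0$ strictly west of $e$, read from east to west, carry the sum bits $z\bmod 2,\ \lfloor z/2\rfloor\bmod 2,\ \lfloor z/4\rfloor\bmod 2,\dots$: the binary expansion of $z$ with least significant digit adjacent to $e$, followed by an infinite tail of $0$s. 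By Definition~\ref{def:read} this row gives the base~$2$ representation of $z$, completing the proof.

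I expect the second paragraph to be the main obstacle: one must refine the transducer correspondence of Lemma~\ref{lem:cols} enough to read off \emph{both} the quotient column and the remainder cell from a \emph{partial} run on the top of a column, and then keep careful track of half-defined versus defined cells so that ``the cells strictly west of $e$'' really form the claimed pattern --- here the anchor-cell hypothesis that $e+\We$ and $e+\No$ are defined is precisely the seed making the definedness induction go through. Two minor points: the base~$3'$ words obtained for the columns need not be the canonical encoding of Definition~\ref{def:btp}, which is harmless since the theorem asks only for \emph{a} base~$3'$ representation of $z$; and the even/odd case split, and the $+1$ performed by the non-local rule as in the proof of Lemma~\ref{lem:cols}, only affect cells at or below row $y_0$ and so do not interfere with the region north of $e$.
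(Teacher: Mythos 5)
Your proposal is correct and follows essentially the same route as the paper: it invokes the $x/2$ FST correspondence on (partial) columns from the proof of Lemma~\ref{lem:cols}, observes that the sum bit deposited on row $y_0$ is the FST's final state, i.e.\ the remainder mod~2, and inducts westwards so that successive columns hold $\lfloor z/2^j\rfloor$ while row $y_0$ collects the binary digits. You are merely more explicit than the paper about the prefix-run refinement and the defined/half-defined bookkeeping, which the paper compresses into ``a direct consequence of the proof of Lemma~\ref{lem:cols}''.
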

 
\begin{proof}
A direct consequence of the proof of Lemma~\ref{lem:cols} is that, in $c_\infty[\alpha]$ a defined sum bit $s\in\{0,1\}$ at position $(x_1,y_0)$ with $x_1 < 0$ and $y_0 < |\alpha|$ gives the state of the $x/2$ FST (Figure~\ref{fig:fst}(b) and (d)) immediately after it reads all base $3'$ symbols from row $y=|\alpha|$ to $y=y_0+1$ on column with x-coordinate $x_1+1$ of $c_\infty[\alpha]$.
As $s$ is the final FST state, \if 0\mode by Lemma~\ref{lem:fstx2},\else by Lemma~\arxivLemTwoX~\cite{Collatz2arxiv},\fi \ we get that $s$ is 0 if that base $3'$ represented number was even; otherwise (if odd) $s$ is $1$.

We also know that the output of the FST, i.e. symbols strictly to the north of $(x_1,y_0)$ represent $\lfloor x/2 \rfloor $. Hence, one base conversion step was performed: $x\mod 2$ is written in the sum bit $s$ at position $(x_1,y_0)$ and $\lfloor x/2 \rfloor$ is computed to the north of it. By induction, all the other base conversion steps are also performed to the west of $(x_1,y_0)$ and we get the result for the anchor cell $(x_0,y_0)$ with $x_0 = x_1 + 1$.
\end{proof}

\begin{remark} 
Figure~\ref{fig:influence}(a) presents several instances of Theorem~\ref{th:base_conversion}. Note that  position $(0,0)$ is always an anchor cell in $c_0[\alpha]$ meaning that the \cqca converts $\ittp{\alpha}$ from base $3'$ to base $2$. Hence, the \cqca can effectively convert any base $3'$ input to base $2$. Figure~\ref{fig:influence}(b) presents the details of such a base conversion and shows that the \cqca base conversion algorithm is rather natural: at each step the parity of the input is computed, then the input is divided by $2$. Also, the algorithm is efficient: for $x\in\N$, it can be shown that $\lceil \log_2(x)\rceil + \lceil \log_3(x)\rceil$ \cqca steps are sufficient to convert $x$ from base $3$ to base 2.
\end{remark}

\begin{remark}
Theorem~\ref{th:base_conversion} also implies that limit configurations of row inputs are very similar to limit configurations of column inputs. Indeed, for $z\in\N$, with base 2 
representation $w\in\{0,1\}^*$ and base 3  representation  $\alpha\in\{0,1,2\}^*$, we have: for all $x \leq 0$ and $y \leq 0$, $c_\infty[w](x,y) = c_\infty[\alpha](x,y)$. 
Thus, for all $x\leq -|w|$ and $y < 0$, columns of $c_\infty[\omega]$ iterate the Collatz process in ternary and the base conversion property holds for any anchor cell. Hence, the base conversion property naturally appears in the \cqca, for both row and column inputs.
\end{remark}

\begin{remark}
Theorem~\ref{th:base_conversion} implies that cells with state $(0,1)$ because of the non-local rule (red carries in Figure~\ref{fig:model} and \ref{fig:influence}) have an interesting interpretation column-wise: they implement the operation $3x+1$ in ternary. Indeed, in base $3$ the operation $3x+1$ is trivial: it consists of appending 1 (represented here by $\bar 0$ via Definition~\ref{def:btp}) to the base $3'$ representation of $x$. 
Thus, the \cqca ``stacks'' trivial ternary steps ($3x+1$) on the same column, and trivial binary steps ($x/2$, i.e.\ a shift in binary) on the same row (Remark~\ref{rk:row1}).
\end{remark}

\begin{corollary}[Parity checking in Collatz]\label{lem:parityCheck}
\normalfont
Let $\alpha \in \{0,1,2\}^*$ 
and, by Lemma~\ref{lem:lc}, let $c_\infty[\alpha]\in S^\W$ be the limit configuration of the \cqca\ on input~$\alpha$ (written in base $3'$ on column $x=0$). 
For any anchor cell (Theorem~\ref{th:base_conversion}) at postion $e\in\W$, 
let $s \in \{0,1\}$ be the sum bit of the cell at  $e+\We$. Then, $s$ is the parity of the number written in base $3'$ on the column at $e+\No$ and going to the north (Definition~\ref{def:read}): this number is even iff $s=0$ and odd iff $s=1$.
\end{corollary}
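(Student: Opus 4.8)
The plan is to observe that Corollary~\ref{lem:parityCheck} is an immediate byproduct of the proof of Theorem~\ref{th:base_conversion} (equivalently, of Lemma~\ref{lem:cols}), requiring no new argument beyond a boundary check. Recall from the proof of Lemma~\ref{lem:cols} that, in $c_\infty[\alpha]$, the defined cells on the column through $e+\No$ read northward give the base $3'$ representation of some $z\in\N$, and that the local \cqca rule applied down that column simulates the base $3'$ $x/2$ FST of Figure~\ref{fig:fst}(b),(d): the input symbols are the base $3'$ symbols of the column just to the east, the output is written on the cells strictly north of $e$, and crucially the sum bit placed at $e+\We$ records the \emph{final state} of that FST after it has consumed all base $3'$ symbols of the column through $e+\No$.

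First I would invoke Lemma~\ref{lem:fstx2} (correctness of the $x/2$ FST), which asserts that the FST halts in state $0$ exactly when the number it has read is even and in state $1$ exactly when it is odd. Composing this with the state-to-sum-bit correspondence (state $0\mapsto$ sum bit $0$, state $1\mapsto$ sum bit $1$, as used throughout Section~\ref{sec:base conversion}) yields $s = z \bmod 2$, which is precisely the claimed equivalence: $s=0$ iff $z$ is even, $s=1$ iff $z$ is odd. Alternatively, one can read the same conclusion straight off Theorem~\ref{th:base_conversion}: since the cells strictly west of $e$ on row $y_0$ spell the base $2$ expansion of $z$, and its proof shows that the single sum bit at $e+\We$ stores the parity bit $z\bmod 2$ emitted in the first division-by-two step of the conversion, the corollary follows. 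Either way, no fresh computation is needed.

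The only thing to verify — and the mildest possible obstacle — is the boundary bookkeeping: one must confirm that the contiguous run of defined cells north of $e$ that the Corollary calls ``the number written in base $3'$ on the column at $e+\No$'' coincides exactly with the run that Lemma~\ref{lem:cols}/Theorem~\ref{th:base_conversion} feed into the FST, i.e.\ that the anchor cell $e$ itself and any cell below it are not inadvertently included, and, in the odd case where the non-local rule fires on the column through $e$, that the extra trailing $\bar 1$ appended there belongs to $T(z)$ and not to $z$. This is exactly the two-case analysis already carried out in the proof of Lemma~\ref{lem:cols}, so the corollary is in effect a restatement of an intermediate fact established there, specialized to the sum bit sitting at $e+\We$.
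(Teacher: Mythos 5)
Your proposal matches the paper's proof: both derive the corollary immediately from the proof of Theorem~\ref{th:base_conversion}, noting that the sum bit at $e+\We$ is the final state of the $x/2$ FST after reading the column north of $e$, and then invoke the FST correctness lemma (Lemma~\ref{lem:fstx2}) to identify that state with the parity. The extra boundary-bookkeeping remark is fine but not needed beyond what the cited proofs already establish.
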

\begin{proof}
Immediately implied by the proof of Theorem~\ref{th:base_conversion}: the sum bit $s$ at position $e+\We$ gives the state in which the $x/2$ FST was after reading base $3'$ symbols on the column to the north of $e$. \if 0\mode By Lemma~\ref{lem:fstx2},\else By Lemma~\arxivLemTwoX in the full version of this paper \cite{Collatz2arxiv},\fi \ the bit $s$ is the parity of the number given by that column.
\end{proof}

\begin{example} 
Figure~\ref{fig:cqca_idea}(b) outlines instances of Corollary~\ref{lem:parityCheck}.
\end{example}
%auto-ignore
% !TEX root = collatz2-main.tex

\subsection{Computational complexity of \cqca prediction}\label{sec:complexity}

In this section we leverage our previous results to make statements about the computational complexity of predicting the \cqca. 
 
\begin{definition}[Bounded \cqca prediction problem]\label{def:bcpp}
\normalfont
Given (a)
any ternary input $\alpha$,  of length $n\in\N$ with  resulting \cqca  limit configuration  $c_\infty[\alpha]$,   
and (b)~any $(x,y) \in \Z^2$, where $\max(|x|,|y|) = O( n )$,  what is the state $c_\infty[\alpha](x,y)$?
\end{definition}

\begin{remark}
The version of this prediction problem, where the question is to predict the state $c_\infty[\alpha](x,y)$ for any $(x,y)\in\W$, is at least as hard as the Collatz conjecture which in \cqca terms states that the $(1,2,1,2,\dots)$ ``glider'' will eventually occur, cf. blue outlined cells in Figure~\ref{fig:model}(d).
\end{remark}

It is straightforward to see that the bounded \cqca prediction problem is in the complexity class P, we can also give a lower bound in terms of AC$^0$ which is the class of problems solved by uniform\footnote{Here, uniform has the meaning used in Boolean circuit complexity:  that members of the circuit family for an infinite problem (set of words) are produced by a suitably simple algorithm~\cite{hesse2002uniform}. The class AC$^0$ is of interest as it is ``simple'' enough to be strictly contained in P, that is, AC$^0 \subsetneq$ P~\cite{moore2011nature}. This enables us to give lower bounds to the computational complexity, or inherent difficulty,  of some problems, such as the bounded \cqca prediction problem. } polynomial size, constant depth Boolean circuits with arbitrary gate fanin~\cite{moore2011nature}:

\begin{restatable}{theorem}{boundedPredAC}
\label{thm:not in AC0} 
\normalfont
The bounded \cqca prediction problem is in P and not in AC$^0$. 
\end{restatable}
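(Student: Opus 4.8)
The containment in P is the easy direction: the limit configuration $c_\infty[\alpha]$ is reached in $O(n)$ \cqca steps in the region $\max(|x|,|y|) = O(n)$ (as noted after Lemma~\ref{lem:rows} and in the proof of Lemma~\ref{lem:cols}, each row/column stabilises in $O(n)$ steps, and by Figure~\ref{fig:colinput}(b) the natural frontier is a $135^\circ$ diagonal), so one simply simulates the \cqca for $O(n)$ steps on an $O(n) \times O(n)$ window and reads off the requested cell; this is clearly polynomial time. The interesting direction is the AC$^0$ lower bound, and the plan is to reduce the PARITY function to the bounded \cqca prediction problem, invoking the classical theorem (Furst--Saxe--Sipser / Ajtai) that PARITY $\notin$ AC$^0$.

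\textbf{The reduction.} Given a bit string $b = b_{n-1}\cdots b_0 \in \{0,1\}^n$ whose parity we wish to compute, I would build (in AC$^0$, i.e.\ by a very simple projection-like map) a base~$3'$ column input $\alpha'$ whose $i$-th symbol from the bottom is $\bar 1$ (the state $(1,1)$, representing trit~$2$) when $b_i = 1$ and $0$ (the state $(0,0)$, representing trit~$0$) when $b_i = 0$. Equivalently, at the level of ternary inputs, $\alpha$ is the string over $\{0,2\}$ with a $2$ in position $i$ iff $b_i = 1$; note $\ittp{\alpha} = \alpha'$ is trivial here since no trit~$1$ occurs, so the AC$^0$ map directly produces the \cqca column input. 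The key point is that for such a column, the parity of $\itr{\alpha}$ equals the parity of the number of $2$'s in $\alpha$ (since $2 \equiv 0$ and $0 \equiv 0 \bmod 2$... wait, $2$ is even), so I must instead be slightly more careful: I would use the symbol $\bar 0$ (state $(0,1)$, representing trit~$1$) for $b_i = 1$ and $0$ for $b_i = 0$, since trit~$1$ is odd. Then $\itr{\alpha} \equiv (\text{number of }1\text{'s in }\alpha) \bmod 2$, which is exactly the parity of $b$. By Corollary~\ref{lem:parityCheck} applied to the anchor cell at position $(0,0)$ (which is always an anchor cell in $c_0[\alpha]$, hence in $c_\infty[\alpha]$, by the Remark following Theorem~\ref{th:base_conversion}), the sum bit of the cell $c_\infty[\alpha](-1,0)$ equals the parity of $\itr{\alpha}$, i.e.\ the parity of $b$. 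Since $(-1,0)$ has $\max(|x|,|y|) = 1 = O(n)$, this is a legal query to the bounded \cqca prediction problem.

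\textbf{Conclusion of the argument.} Putting it together: if the bounded \cqca prediction problem were in AC$^0$, then composing the AC$^0$ input-construction map $b \mapsto \alpha$ with the AC$^0$ circuit solving the prediction problem on query $(-1,0)$ would give an AC$^0$ circuit computing $\mathrm{PARITY}(b)$, contradicting the Furst--Saxe--Sipser theorem. The main obstacle — really the only non-routine point — is getting the encoding exactly right so that (i) the map from $b$ to the \cqca initial configuration is genuinely AC$^0$-computable (it should be a simple symbol-wise substitution, so this is fine, one just has to check the uniformity bookkeeping), and (ii) the quantity read out by Corollary~\ref{lem:parityCheck} is literally the parity of $b$ and not some related-but-different quantity; this forces the choice of encoding $1 \mapsto \bar 0$ rather than, say, $1 \mapsto 2$, since we need each "1"-slot to contribute an odd amount to $\itr{\alpha}$ modulo~$2$, and a standard carry-free observation that with only the trits $\{0,1\}$ present there is no ternary carrying so $\itr{\alpha} = \sum_i [\alpha_i = 1]\, 3^i \equiv \sum_i [\alpha_i = 1] \pmod 2$. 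With that encoding fixed, the reduction and the lower bound follow immediately.
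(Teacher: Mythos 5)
Your proposal is correct and follows essentially the same route as the paper: containment in P by direct simulation, and the AC$^0$ lower bound by reducing \texttt{PARITY} to the prediction problem via the encoding $1\mapsto\bar 0$, $0\mapsto 0$ on a column input (so the number's parity equals the parity of the number of 1s in the input), reading the answer off the sum bit at $(-1,0)$ using Corollary~\ref{lem:parityCheck}. Your mid-proof self-correction away from the $2\mapsto\bar 1$ encoding lands exactly on the paper's choice, and the carry-free observation justifying it is the same one the paper uses.
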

\begin{proof}
To see that the problem is in P, simply  encode the initial configuration as input to a two-tape Turing machine and, in $O(n^2)$ time,  run the simulation forward until we have filled the plane up to distance $n$ from the input.

To show the problem is outside AC$^0$, 
 let $v \in\{0,1\}^*$, and let $\alpha \in \{0,1,2\}^*$ be the base $3$ word such that $\alpha = v$. 
 Let $c_\infty[\alpha]$ be the limit configuration of the \cqca on (column) input $\alpha$ written in base $3'$ (as usual) on column $x=0$. 
 Since there are no 2-symbols in $v$, by Definition~\ref{def:btp}, the base $3'$ encoding of $v$ maps 0 to 0 and 1 to $\bar 0$, an encoding straightforward to represent in binary, and straightforward to compute in AC$^0$.
   Let $b$ be the sum bit of $c_\infty[\alpha](-1,0)$,  i.e. $(-1,0)+\No+\Ea$ is  the first symbol of $\alpha$, hence (well) within the bound $n=O(|\alpha|)$ set by Definition~\ref{def:bcpp}.

Deciding whether a natural number is odd or even is equivalent to checking the parity of its number of $1$s written in base $3$. In base $3'$, this translates to checking the parity of the total number of sum and carry bits.
 By Corollary~\ref{lem:parityCheck}, $b$ is the parity of the natural number represented by $\alpha$, equivalently, $b$ is the parity of the number of 1s in $v$.
    Hence the \cqca solves the problem  
     $\texttt{PARITY} = \{ v \in \{0,1\}^* \, \mid v \text{ has an odd number of } 1\text{s}\}$
     on the input~$v$, with the result placed at  distance $2 < |\alpha|$ from the input word $\alpha$.  
    Since $\texttt{PARITY}$ sits outside the complexity class AC$^0$~\cite{moore2011nature}, the 
    Bounded \cqca prediction problem is not in   AC$^0$. 
\end{proof}

\begin{remark}
The proof of the previous theorem shows how to use the \cqca to solve $\texttt{PARITY}$. In fact, we can say something stronger:  in any \cqca configuration, each sum bit with defined $\No + \Ea$ neighbour solves an instance of the $\texttt{PARITY}$ problem. See Figure~\ref{fig:cqca_idea}. 
\end{remark}

\newcommand{\Arbcpp}{Upper $z$-rectangle prediction problem\xspace}
\newcommand{\arbcpp}{upper $z$-rectangle prediction problem\xspace}
\begin{definition}[\Arbcpp]\label{def:rcpp}
\normalfont
Let $\alpha \in \{0,1,2\}^\ast$, $z = \itr{\alpha}\in\N$ and let $n = \lceil \log_2 z\rceil \in \N$.
Let $c[\alpha]$ be the associated \cqca initial configuration (Definition~\ref{def:ica}) and $R = \{ (x,y) \mid  -n \leq x \leq 0, 0 \leq y \leq |\alpha|  \} \subsetneq \Z^2$.
The \arbcpp asks: What are the states, in the limit configuration $c_\infty[\alpha]$, of all cells with positions  $(x,y) \in R$. 
\end{definition}

\begin{example}
Figure~\ref{fig:influence}(c) gives a schematic representation of $R$, the upper $z$-rectangle. Figure~\ref{fig:influence}(d) and Figure~\ref{fig:smallC} each give and instance of $R$, respectively for $z=\ibin{101111011011}=3035$ and $z=\ibin{11110010}=242$.
\end{example}

NC$^1$ is the class of problems solved by uniform polynomial size, logarithmic depth (in input length) Boolean circuits with gate fanin  $\leq 2$~\cite{hesse2002uniform}.
The proof of the following theorem intuitively comes from the fact that  predicting the entire upper $z$-rectangle amounts to running $\lceil \log_3 z\rceil$ base conversions in parallel\if 0\mode . \else 
 \ (the proof is in the full version of this paper~\cite{Collatz2arxiv}, Theorem~\arxivThComplexity).\fi

\begin{restatable}{theorem}{ncone}
\label{thm:nc1}
\normalfont
The \arbcpp is in NC$^1$, and is  not in AC$^0$. 
\end{restatable}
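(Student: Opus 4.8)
The plan is to prove the two claims separately: the NC$^1$ upper bound by exhibiting an explicit circuit family that builds the entire upper $z$-rectangle, and the AC$^0$ lower bound by reducing from \texttt{PARITY} exactly as in the proof of Theorem~\ref{thm:not in AC0}. The AC$^0$ part is immediate: the upper $z$-rectangle $R$ always contains the cell $(-1,0)$ (since $n = \lceil\log_2 z\rceil \geq 1$ for $z \geq 2$, and that cell is within horizontal distance $n$ of column $x=0$), so any algorithm solving the \arbcpp in particular outputs the sum bit $b = c_\infty[\alpha](-1,0)$, which by Corollary~\ref{lem:parityCheck} equals the parity of the number of $1$s in a binary word $v$ whenever $\alpha$ contains no $2$-symbol; since \texttt{PARITY} $\notin$ AC$^0$ and the base $3'$ re-encoding of a $\{0,1\}$-word is AC$^0$-computable, we conclude the problem is not in AC$^0$.

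For the NC$^1$ upper bound, the key structural observation (from Lemma~\ref{lem:cols} and Theorem~\ref{th:base_conversion}) is that column $x_0$ of $c_\infty[\alpha]$, read from the north, is the base $3'$ encoding of $T^{|x_0|}(z)$, and that the cells of the upper $z$-rectangle are precisely the first $\lceil\log_3 z\rceil$ or so trits of the values obtained by repeatedly halving $z$ in ternary — i.e.\ the rectangle is the ``base-conversion diagram'' of $z$, whose column $-k$ (above the anchor row) holds $\lfloor z/2^k \rfloor$ in base~$3'$ and whose anchor-row sum bit at position $-k$ holds the $k$-th bit of $z$ (Figure~\ref{fig:influence}(b)). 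So it suffices to compute, in NC$^1$, for each $k$ with $0 \le k \le n$, the ternary representation of $\lfloor z/2^k\rfloor$ together with its base $3'$ decoration (the carry bits), given $z$ (equivalently $\alpha$) as input. First I would note that $z \le 3^{|\alpha|}$, so all these quantities have $O(|\alpha|)$ bits/trits and the target output has polynomial size. Computing $\lfloor z/2^k \rfloor$ in binary for all $k$ simultaneously is just iterated shifting, trivially in NC$^1$ (indeed AC$^0$ with the right encoding). The content is: (i) integer division by a constant and iterated addition are in NC$^1$ — more precisely, converting an $O(m)$-bit binary number to base $3$ is in NC$^1$ by~\cite{hesse2002uniform,baseChanges}; and (ii) recovering the base $3'$ carry bits from the base $3$ string is a local recomputation (Definition~\ref{def:btp}: a $1$ becomes $\bar 0$ or $1$ according to whether the nearest non-$1$ neighbour to the right is a $2$), which is an AC$^0$, hence NC$^1$, post-processing step. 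Assembling: one NC$^1$ subcircuit per column $x_0 = 0, -1, \dots, -n$ converts $\lfloor z / 2^{|x_0|}\rfloor$ from binary to base $3$ and then to base $3'$; a final AC$^0$ layer reads off the requested cell states (sum bit $=$ trit mod the encoding, carry bit $=$ bar/no-bar); since NC$^1$ circuits compose (constant-many levels of NC$^1$, or even polynomially many run in parallel on disjoint outputs, stay in NC$^1$), the whole rectangle is produced in NC$^1$.

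I would also remark that one should double-check the indexing at the bottom edge of the rectangle: the anchor row $y = y_0$ is where the sum bits store the successive parities, and Lemma~\ref{lem:cols} guarantees the ``southmost sum-bit-$0$ cell'' delimiters line up so that each column of $R$ genuinely is $\lfloor z/2^k\rfloor$ in base $3'$ (no off-by-one in where a column terminates). This is a finite bookkeeping check using the structure already established in the proof of Lemma~\ref{lem:cols} (``at most two extra cells to the south are used'').

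\textbf{Main obstacle.} The delicate point is part (i): that binary-to-ternary conversion — equivalently, simultaneous division by powers of $2$ with ternary output, or iterated addition of ternary digits with carry — lies in NC$^1$ and not merely in TC$^0$ or in some larger class. This is exactly the known result of Hesse--Allender--Barrington~\cite{hesse2002uniform} on uniform constant-change base conversion (and of~\cite{baseChanges} on base conversion via iterated transducers), so the argument is to cite it carefully and verify that the uniformity notion matches Definition~\ref{def:bcpp}'s framing and that we are converting between fixed bases $2$ and $3$ (the hard, unbounded-base cases are irrelevant here). The remaining steps — iterated shifts, the $3 \to 3'$ carry recovery, and the final read-off — are all AC$^0$ and pose no difficulty; the only care needed is the composition bookkeeping and the rectangle-boundary alignment noted above.
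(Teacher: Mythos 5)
Your proof is correct and rests on the same two pillars as the paper's: the AC$^0$ lower bound is verbatim the \texttt{PARITY} reduction of Theorem~\ref{thm:not in AC0} via Corollary~\ref{lem:parityCheck} (noting $(-1,0)\in R$), and the NC$^1$ upper bound leans on the Hesse--Allender--Barrington base-conversion result~\cite{hesse2002uniform}. Where you diverge is the decomposition of $R$. The paper slices $R$ into \emph{rows}: by Theorem~\ref{th:base_conversion} applied at the anchor cells $(0,y)$, row $y$ is exactly the base~2 representation of the base~3 number written on column $x=0$ with least significant trit at $(0,y+1)$ --- i.e.\ of a suffix of the input $\alpha$ --- so a single layer of $|\alpha|$ parallel 3-to-2 conversion circuits finishes the job. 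You slice $R$ into \emph{columns}, each holding $\lfloor z/2^k\rfloor$ in base $3'$, which forces a longer pipeline: 3-to-2 conversion of $\alpha$, shifts, 2-to-3 conversion of each $\lfloor z/2^k\rfloor$, and the $3\to 3'$ carry decoration of Definition~\ref{def:btp}. The composition is still NC$^1$, so your argument goes through, but it needs one fact the paper's route does not: that the base $3'$ word actually present on column $-k$ is the \emph{canonical} encoding $\ittp{\cdot}$ of $\lfloor z/2^k\rfloor$, so that your AC$^0$ re-decoration reproduces the true carry bits. This holds (the $x/2$ FST output obeys ``$\bar 0$ is followed only by $\bar 0$ or $0$'' and ``$1$ only by $1$ or $\bar 1$'', which pins down exactly the encoding of Definition~\ref{def:btp}), but it should be stated. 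In compensation, your version makes the carry bits of the cells explicit, whereas the paper's row-based proof produces only the binary representations (the sum bits) and leaves the rest implicit; your flagged off-by-one and boundary checks are indeed just finite bookkeeping.
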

\if 0\mode
\begin{proof}
 Not being in AC$^0$ follows from the same proof as that of Theorem~\ref{thm:not in AC0}; in particular, by Definition~\ref{def:rcpp}, the position $(-1,0)\in\W$ used in that proof is in $R$ (Definition~\ref{def:rcpp}). 

 Hesse, Allender,  Barrington~\cite{hesse2002uniform} 
 have shown that conversion from base~3 to~2 is in uniform NC$^1$. 
 By Theorem~\ref{th:base_conversion}, 
 each row $y$ of the rectangle $R$ contains exactly the base 2 representation of the base 3 number that is written on column $x=0$, with its least significant digit at position $(0,y+1)$. 
 Hence $|\alpha|$ parallel applications of a NC$^1$ base conversion circuit (of polynomial size and logarithmic depth, in $|\alpha|$) compute all $|\alpha|$ rows of $c_\infty[\alpha]$ that are contained in the rectangular region~$R$, which in turn gives a uniform\footnote{The proof of uniformity follows from the simplicity of the definition of $R$ --- each of the $|\alpha|$ outputs map to a single row of $R$-- and would use somewhat technical notions of DLOGTIME, or DLOGTIME-uniformity-AC$^0$, uniformity for NC$^1$~\cite{barrington1990uniformity}.
  Details are omitted for brevity.} NC$^1$ circuit. 
\end{proof}
\fi

\begin{op}[Lower $z$-rectangle prediction problem]\label{op:lower rect}
What is the complexity of filling out the lower $z$-rectangle? I.e. the rectangle defined by $M = \{ (x,y) \mid  -n \leq x \leq 0,  -n \leq y \leq 0 \} \subsetneq \Z^2$ (same notation as Definition~\ref{def:rcpp} and shown on Figure~\ref{fig:cqca_idea}(c)). We know it is in P, and that it is not  in AC$^0$ (Theorem~\ref{thm:not in AC0}). Can we get matching lower and upper bounds? 
\end{op}

\begin{figure}[t]
\centering
  \begin{minipage}[c]{0.75\textwidth}
    \caption{
    {A small positive cycle, almost. The  rightmost magenta column encodes
    $z = \itp{\bar 1 \bar 1 \bar 1 \bar 1 \bar 1} = 242$ in base~$3'$.
    By Lemma~\ref{lem:cols}, the leftmost magenta column is $T^{\lceil \log_2(x) \rceil}(x) = \itp{\bar 1 \bar 1 1 \bar 1 \bar 1} = \itr{22122} = 233$. The numbers $242$ and $233$ differ in only one trit (in blue): they almost generate a small positive Collatz cycle (Definition~\ref{def:sc}). 
    Theorem~\ref{th:base_conversion} tells us that the region above the magenta row is characterised by base conversion, what about the region below? 
     } 
    } \label{fig:smallC}
  \end{minipage}\hspace{0.5ex}
    \begin{minipage}[c]{0.22\textwidth}
\if 1\mode \vspace{-3ex}\fi\includegraphics[width=1\textwidth]{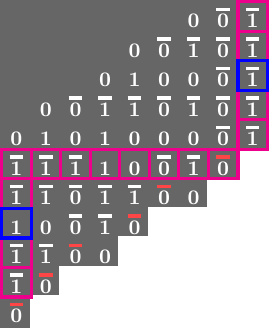}
  \end{minipage}
\if 1\mode  \vspace{-5ex}\fi
\end{figure}

These prediction problems are closely related to what we call \emph{small positive Collatz cycles}. Indeed, if predicting $M$ (Open Problem~\ref{op:lower rect}) was easy, one could hope to easily answer whether there exists $x > 2$ such that $x = T^{\leq \lceil \log_2(x) \rceil}(x)$:
\begin{definition}[Small positive Collatz cycles]\label{def:sc}
\normalfont
Let $x\in\N^+$. We say that $x$ generates a small positive Collatz cycle if $x = T^m(x)$ with $0<m\leq {\lceil \log_2(x) \rceil}$.
\end{definition}

\begin{op}
There are no small positive Collatz cycles besides $(2,1,2,1,\ldots)$.
\end{op}

\begin{example}
 Figure~\ref{fig:smallC} shows that answering the question about small positive Collatz cycle seems challenging. Indeed, the Figure illustrates that for $x = \itp{\bar 1 \bar 1 \bar 1 \bar 1 \bar 1} = \itr{22222} = 242$, we have $T^{\lceil \log_2(x) \rceil}(x) = \itr{22122} = 233$. The two numbers differ only in one trit, it is \emph{almost} a small positive Collatz cycle. 
\end{example}

\begin{remark}
One can also reformulate the small positive Collatz cycles problem by saying that, although the Collatz process is able to compute base $3'$-to-2 conversion, it can never compute base 2-to-$3'$ in $\lceil \log_2(x) \rceil$ \cqca steps (except for $x= 2$). Note that the \cqca would have to produce base $3$ trits, from most to least significant trit which seems very hard to in the $\lceil \log_2(x) \rceil$ time constraint.
\end{remark}

\if 1\mode
\section{Discussion:~Structural~implications~on~Collatz~sequences}\label{sec:discussion}
\else
\subsection{Discussion:~Structural~implications~on~Collatz~sequences}\label{sec:discussion}
\fi

Figure~\ref{fig:influence} summarises some strong consequences of our results  %Theorem~\ref{th:base_conversion} 
on the structure of Collatz sequences. 
First, since columns are iterating the Collatz process in base~$3$ (Lemma~\ref{lem:cols}), the base conversion theorem implies that any
$z\in\N$ is giving some rather specific constraints on the next $\lceil \log_2 z \rceil$ iterations of the Collatz process.
Specifically, the upper $z$-rectangle, Figure~\ref{fig:influence}(b), which corresponds to the diagram of the conversion of $z$ from base $3'$ to base $2$, is constraining, on average, half of the trits (base 3 digits) of any iteration $T^{\leq \lceil \log_2 z \rceil(z)}(z)$, Figure~\ref{fig:influence}(d). 

Second, Theorem~\ref{thm:nc1} tells us that this upper $z$-rectangle is easy to predict, in the sense that the entire region can be computed in NC$^1$. The computational complexity of lower $z$-rectangle prediction remains open.

Third, Figure~\ref{fig:influence}(b) illustrates how the parity checking result of Corollary~\ref{lem:parityCheck} places constraints on future iterates. A sum bit at {\em any} position  $e\in\W$ of a configuration $c_\infty[\alpha]$ is constrained to be the parity of the number of 1s  in the entire column (both sums and carries) whose base is at the north-east of $e$.

We should note that these phenomena are occurring everywhere, at each Collatz iterate.  
Hence, although patterns have been notoriously hard to identify in the Collatz process, 
our results give a new lens which reveals some detail.

\if 1\mode
\vspace{\baselineskip}

\noindent {\bf Acknowledgement.} 
Sincere thanks to Olivier Rozier for fruitful interactions, and to the anonymous reviewers for helpful comments. 
\fi

\if 0\mode
%auto-ignore
% !TEX root = collatz2-main.tex

\section{Constructing rational cycles in the Collatz process}\label{sec:cycles}

Here, we study rational cycles in the Collatz process by exploiting \emph{parity vectors}, a notion central to the study of the Collatz process \cite{Terras1976,10.2307/2322189,wirsching1998the,Monks2004,Rozier2019ParitySO}.
We show that parity vectors can naturally be exploited in the \emph{reversed} \cqca\ (\rcqca). From there we exhibit a construction to generate the 2-adic expansion of any rational 2-adic integer involved in a Collatz cycle. We reformulate the cyclic Collatz conjecture in $\N$ and $\Z$ as a simple prediction problem in the \rcqca. Finally, this Section leaves open the question of how the base conversion abilities of the \cqca (Theorem~\ref{th:base_conversion}) generalise to the conversion of any element of $\Z_2 \cap \Z_3 \cap \Q$ from $\Z_2$ (i.e.  identified with infinite binary expansion) to $\Z_3$ (i.e. identified with infinite ternary expansion).

\subsection{Overview of the Collatz process in $\Z_2$} As mentionned in Section~\ref{sec:2dcqca}, the most general version \cite{10.2307/2322189} of the Collatz process is defined on the uncountable ring $\Z_2$. Elements of $\Z_2$ are identified with the set of one-way infinite binary strings, hence for notational
convenience we let $\Z_2$ denote either the set 2-adic numbers or the set of one-way infinite binary strings, depending on the context. Following the convention for finite binary strings, in this paper we represent elements of $\Z_2$ with their least significant bit to the right,\footnote{This is in contrast with the mathematic literature where 2-adic integers are usually represented with their least significant bit to the left, e.g. $101010\ldots \in \Z_2$.} e.g.\ we write $\ldots010101\in\Z_2$. The operations $+$ and $\times$ in $\Z_2$ are defined bitwise as the natural extension of standard finite binary addition and multiplication~\cite{caruso:hal-01444183}. For instance, we can easily interpret $z= (01)^\infty$ as being the number $-\frac{1}{3}$ since $3z = 2z + z = (10)^\infty + (01)^\infty = (1)^\infty$, and so $3z + 1 = 0$. Note that the set of natural numbers $\N$ is identified with $\{(0)^\infty w \mid w \in \{0,1\}^* \} \subset \Z_2$ and the set of strictly negative integers $\Z \setminus \N$ with $\{(1)^\infty w \mid w \in \{0,1\}^* \}\subset \Z_2$ (matching the standard ``two's complement'' representation of signed integers). Finally, $\Q \cap \Z_2$ is the set of $2$-adic integers with eventually periodic expansion \cite{Conrad}. From the point of view of $\Q$, the set $\Q \cap \Z_2$ is the set of irreducible fractions with \emph{odd} denominator\footnote{We remark, for instance, that it is impossible to represent a number such as $x=\frac{1}{2}$ in $\Z_2$. Indeed we cannot have $2x=1$ since multiplying by $2$ appends a $0$ to the least significant side of $x$.}, e.g. $-\frac{4}{23} \in \Q \cap \Z_2$. With all this context in mind, it becomes clear how the Collatz process generalises to $\Z_2$: if $z\in\Z_2$ is even, i.e.\ the least significant bit is~$0$, we remove that bit and if $z\in\Z_2$ is odd, i.e.\ the least significant bit is~$1$, we run it through the $3x+1$ FST whose logic directly generalises to $\Z_2$ (Remark~\ref{rk:z2}). The generalised version of the Collatz conjecture to $\Z_2$ is known as Lagarias' periodicity conjecture and can be stated as follows: ``Any rational 2-adic $z\in\Z_2 \cap \Q$ eventually reaches a cycle under action of the Collatz process''. Finally, it is known that Collatz cycles in $\Z_2$ contain only rational numbers implying that  irrational 2-adic integers (i.e.\ aperiodic binary expansion) never reach any cycle~\cite{wirsching1998the}.

\subsection{Parity vectors and the reverse \cqca}\label{sec:pv}
\subsubsection{Parity vectors}\label{sec:pvpv}
\begin{figure}[h!]

\begin{subfigure}[b]{0.475\textwidth}
\centering
\includegraphics[scale=0.8]{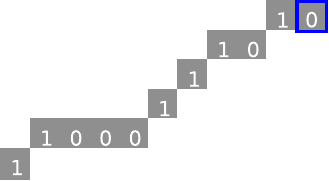} 
\caption{Initial parity vector configuration in the \rcqca.}
\end{subfigure}
\hfill
\begin{subfigure}[b]{0.475\textwidth}
\centering
\includegraphics[scale=0.8]{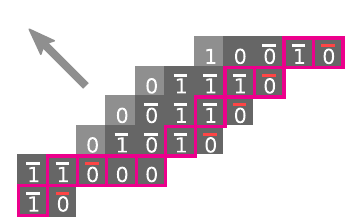} 
\caption{Evolution after several \rcqca steps: the arrow indicates the north-west evolution direction.}
\end{subfigure}
\vskip\baselineskip 
\centering
\begin{subfigure}[b]{0.45\textwidth}
\centering
\includegraphics[scale=0.8]{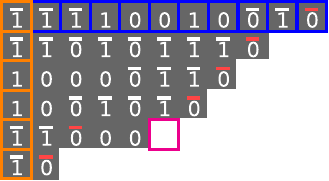} 
\caption{Evolution after complete reconstruction of row $y=0$.}
\end{subfigure}

\caption{Running the \rcqca from a finite parity vector. Same colour codes as Figure~\ref{fig:model}. (a) Initial \rcqca configuration for the parity vector
$p=(0,1,0,1,1,1,0,0,0,1,1)\in\Pa$ (Definition~\ref{def:initp}). The  cell at position $(0,0)$ is outlined in blue. (b) Configuration after several \rcqca steps. Evolution goes north-west. Because of the non-local rule, some cells have appeared to the east of the initial parity vector (magenta). (c) Because the parity vector is finite, the computation area is bounded and will eventually terminate when all cells on row $y=0$ and $-|p|\leq x \leq 0$ are computed (in blue). By Lemma~\ref{lem:pv}, row $y=0$ gives the binary encoding of the smallest  $z\in\N$ such that $|p|$ is the parity vector of $z$'s first $|p|$ Collatz steps. Furthermore, again by Lemma~\ref{lem:pv}, the leftmost column (orange) gives the ternary expansion of $T^{|p|}(z)$. Indeed, here we have $z = \ibin{11110010010} = 1938$  (as read on the blue line) and $T^{11}(z) = 692 = \itp{\bar 1 \bar 1 1 1 \bar 1 \bar 1} = \itr{221122}$ (as read on the orange column). Note that, in the forward \cqca, the cell outlined in magenta would be in state $(0,0)$. But, in the backward direction, the rule can never apply to this cell hence, it is left undefined. This is related to the fact that we study parity vectors of the map $T$ where $T_1(x) = (3x+1)/2$ and not $T_1(x)=3x+1$.}\label{fig:border}
\end{figure}

For $z\in\Z_2$, the parity vector associated to the Collatz trajectory $\mathcal{O}(z) = (z,T(z),T^2(z),\dots)$ is the sequence $p = (z \mod 2, T(z) \mod 2, T^2(z) \mod 2,\dots)$. We write $\Pa = \cup_{n\in\N}\{0,1\}^n$ as the set of finite parity vectors and we write for instance $p = (p_0,p_1,p_2,p_3) = (1,0,0,1) \in \Pa$ which is a valid parity vector for the finite Collatz trajectory $(13,20,10,5)$. We write $|p|$ for the size of the parity vector $p$. Parity vectors have an important property: they are all \emph{feasible}, meaning that for any $p\in\Pa$ there is $z\in\N$ such that $p$ is the parity vector of the first $|p|$ Collatz steps of $z$ \cite{wirsching1998the}. Furthermore, if $z_1,z_2\in\N$ share the same last $n$ bits in their binary representation (i.e. $z_1 \equiv z_2 \mod 2^n$) then the parity vector of their first $n$ Collatz steps are the same\footnote{There is in fact a mysterious relationship between the $\lceil \log(z) \rceil$ bits of $z\in\N$ and the elements in the parity vector of its first $\lceil \log(z) \rceil$ Collatz step \cite{1907.00775_RP_2020_sub}.}. More generally, considering the infinite parity vector associated to the Collatz sequence of any 2-adic integer gives the map $Q: \Z_2 \to \{0,1\}^\N$ which maps $z$ to its parity vector. The map $Q$ is one-to-one and onto and has other remarkable properties \cite{10.2307/2322189,Rozier2019ParitySO}. In that setting, Lagarias' periodicity conjecture reformulates as follows: if $z\in\Z_2$ is eventually periodic (i.e. $z\in\Q\cap\Z_2$) then $Q(z)$ is also eventually periodic\footnote{The converse is known: if $Q(z)$ is eventually periodic then $z$ also is \cite{10.2307/2322189,Rozier2019ParitySO}.} \cite{10.2307/2322189}. 

\subsubsection{The reverse \cqca} The reverse \cqca (\rcqca) shares the same definition as the \cqca except for its local rule which, intuitively,  inverses the action of the \cqca. 
More precisely, by inverse we mean that, when considering a cell at position $e\in\W$, if both the bit and carry of $e+\Ea$ are defined and the bit of $e+\So$ is defined, then there is only one bit and carry choice for $e$ which match with the local rule of the \cqca. For instance, if $c(e+\Ea) = (\color{blue} 1\color{black}, \color{red}{1}\color{black})$ and $c(e+\So) = (\color{black}{0}\color{black},\bot)$ then we must have $c(e) = (\color{violet}{0}\color{black},\color{violet}{1}\color{black})$ since $0 + \color{blue} 1\color{black} + \color{red}{1}\color{black} = \color{violet}{0}\color{black} \mod 2$ and $0 + \color{blue} 1\color{black} + \color{red}{1}\color{black} \geq 2$. Note that the logic of the \rcqca corresponds, horizontally, to running the inverse of the $3x+1$ FST and, vertically (from bottom to top), to running the $x/2$ FST of which arrows have been inverted (Remark~\ref{rk:inv}, Appendix~\ref{app:fst}).
\ \\

Inputting a parity vector $p$ to the \rcqca is done as follows: 

\begin{definition}[Initial configuration \ensuremath{c_0[p]}]\label{def:initp}
\normalfont
Let $p\in\Pa$ be a finite parity vector. Then $c_0[p]\in S^\W$, the initial configuration associated to $p$ is defined by the following process: start at position $e = (0,0)$ and set $c_0[p](e) = (p_0,\bot)$. If $p_0 = 0$ move to position $e'=e+\We$ else move to position $e'=e+\So +\We$, repeat with the next entry in the parity vector or stop if the end is reached. Set $c_0[p](e) = (\bot,\bot)$ at every other position $e$.  
\end{definition}

\begin{remark}
Similarly to Lemma~\ref{lem:lc}, and with a similar proof, the limit configuration $c_\infty[p]$ associated to input parity vector $p$ is well defined.
\end{remark}

\begin{example}
Figure~\ref{fig:border}(a) and Figure~\ref{fig:border}(c) respectively show $c_0[p]$ (position $(0,0)$ in blue) and $c_\infty[p]$ for  $p=(0,1,0,1,1,1,0,0,0,1,1)$. Figure~\ref{fig:border}(b) shows how the local rule of the \rcqca is applied.
\end{example}

We have the following general result:

\begin{lemma}\label{lem:pv}
\normalfont
Let $p\in\Pa$ be a finite parity vector and $c_\infty[p]$ be the limit configuration of the \rcqca when $p$ is given as initial input (Definition~\ref{def:initp}). Then, on row $y=0$ of $c_\infty[p]$ we read the binary representation of the smallest natural number $z\in\N$ such that $p$ is the parity vector of the first $|p|$ Collatz steps of $z$. Also, on row $x=-|p|$ we read the base $3'$ representation of $T^{|p|}(z)$.
\end{lemma}
\begin{proof}
This is immediate by reversibility of the \cqca rule. Running the \cqca from the row $y=0$ constructed by the \rcqca will lead to the same choices for each cell, and produce the same parity vector. It is known that for a parity vector of size $n$, only one $z < 2^n$ is such that it follows that parity vector during its first $n$ Collatz steps \cite{wirsching1998the}. Hence, row $y=0$ gives the binary representation of the smallest natural number $z\in\N$ such that $p$ is the parity vector of its first $|p|$ Collatz steps. The fact that column with x-coordinate $x=-|p|$ gives the base $3'$ representation of $T^{|p|}(z)$ follows from Theorem~\ref{th:base_conversion} applied at anchor cell at position $(-|p|,-|p|_1 -1)$ with $|p|_1$ the number of $1$s in $p$.
\end{proof}

\begin{example}
Figure~\ref{fig:border}(c) gives an example of Lemma~\ref{lem:pv} for $p=(0,1,0,1,1,1,0,0,0,1,1)$. On row $y=0$, we read, $z = \ibin{11110010010} = 1938$. It is the smallest integer for which the first $|p|=11$ Collatz iterates have the parity prescribed by $p=(0,1,0,1,1,1,0,0,0,1,1)$. Indeed, the first $11$ Collatz iterates of $1938$ are: $(1938, 969, 1454, 727, 1091, 1637, 2456, 1228, 614, 307, 461)$.
\end{example}

\begin{remark}
Lemma~\ref{lem:pv} can be effortlessly generalised to half-infinite parity vectors (first bit is written at $(0,0)$, and going to the south-west). In which case, for an infinite parity vector $p$, row $y=0$ gives the 2-adic expansion of $Q^{-1}(p)$ ($Q$ is defined in Section~\ref{sec:pvpv}, see~\cite{10.2307/2322189,Rozier2019ParitySO}).
\end{remark}

\begin{remark}
The cyclic Collatz conjecture in $\N$ is equivalent to: the only parity vectors $p\in\Pa$ for which $z = T^{|p|}(z)$ with $z\in\N$ are powers (in the sense of concatenation) of $(0,1)\in\Pa$ and powers of $(1,0)\in\Pa$, e.g. $(0,1,0,1,0,1)$ or $(1,0,1,0,1,0)$.
 Note that it is equivalent to saying that, in Figure~\ref{fig:border}, the leftmost (orange) column never contains the base $3'$ representation of the row $y=0$ (blue), given that the number represented by the row $y=0$ is greater than $2$.

\end{remark}

\subsection{Parity vectors and the cyclic world $\W/\!\sim_p$}
\subsubsection{Constructing $2$-adic expansions of Collatz cyclic numbers}

\begin{figure}[h!]

\begin{subfigure}[b]{0.475\textwidth}
\centering
\includegraphics[scale=0.8]{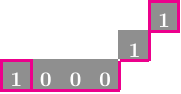} 
\caption{Initial parity vector configuration in the cyclic world $\W/\!\sim_p$. \vspace{\baselineskip}}
\end{subfigure}
\hfill
\begin{subfigure}[b]{0.475\textwidth}
\centering
\includegraphics[scale=0.68]{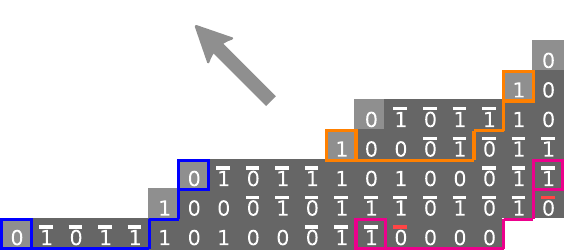} 
\caption{Evolution of the \rcqca in the cyclic world: the arrow indicates the north-west evolution direction.}
\end{subfigure}
\vskip\baselineskip 
\centering
\begin{subfigure}[b]{0.45\textwidth}
\centering
\includegraphics[scale=0.63]{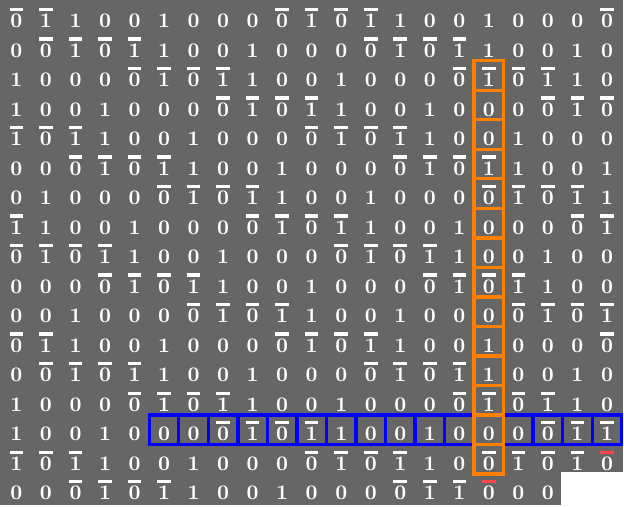} 
\caption{Part of the limit configuration in the cyclic world.}
\end{subfigure}
\hfill
\begin{subfigure}[b]{0.45\textwidth}
\centering
\includegraphics[scale=0.63]{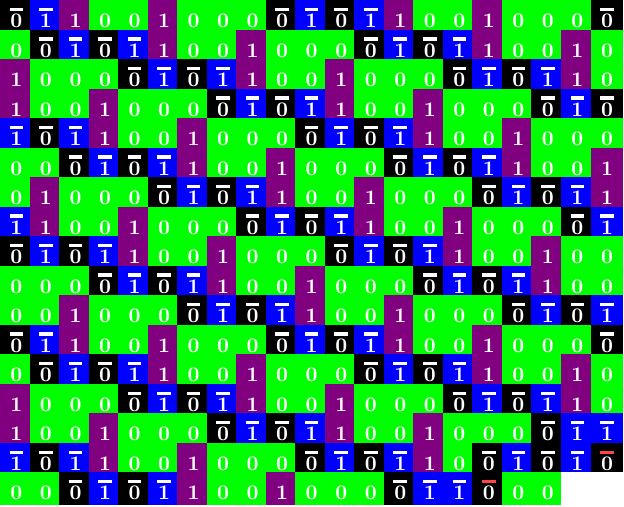} 
\caption{Same as (c) with one color per state $(0,0),(0,1),(1,0),(1,1)$.}
\end{subfigure}

\caption{\small The \rcqca in the cyclic world of parity vector $p=(1,1,0,0,0)\in\Pa$. 
Cells colours white, light grey and dark grey have the same meanings as Figure~\ref{fig:model}, except for (d). (a) Initial configuration $\tilde{c}_0[p]$ in the cyclic world $\W/\!\sim_p$ (Definition~\ref{def:initpt}). The two cells outlined in magenta are at equivalent positions in $\W/\!\sim_p$: they contain the same state. All cells underlined in magenta are in the same parity vector cut (Definition~\ref{def:pvc}). (b) After some \rcqca steps in the cyclic world. Only the quadrant $x <0$ and $y>-|p|_1$ is shown. Cells outlined with the same color are at equivalent positions: they must contain the same state. Cells underlined with the same color are in the same parity vector cut (Definition~\ref{def:pvc}). (c)~Quadrant $x <0$ and  $y>-|p|_1$ of the limit configuration $\tilde{c}_\infty[p]$ (Definition~\ref{def:initpt}). On row $y=0$ (in blue, the outlined cell stop at the end of the first period of the expansion), by Theorem~\ref{th:cycleconstruct}, we read the 2-adic expansion of $Q^{-1}(p^\infty)$, the Collatz cyclic number associated to $p=(1,1,0,0,0)\in\Pa$ (Definition~\ref{def:ccn}). In this case, we have $Q^{-1}(p^\infty)=\frac{5}{23}$ and indeed we read $(00010110010)^\infty 00011 \in \Z_2$ which is the 2-adic expansion of $\frac{5}{23}$. Conjecture~\ref{cj:bc} states that the orange column (the outlined cell stop at the end of the first period of the expansion) gives the $3$-adic expansion of $z$, it is the case because the $3$-adic expansion $(20021001011)^\infty 201$, given by reading the base $3'$ symbols of the orange column from south to north, is the $3$-adic expansion of $\frac{5}{23}$ (see Example~\ref{ex:cycle} which gives the formula to apply in order to transform an eventually periodic 2-adic/3-adic expansion into its associated fraction in $\Q$). (d) Same as (c) but individual colors have been displayed for states $(0,0),(0,1),(1,0),(1,1)$ in order to visualize the patterns generated by the \rcqca in the cyclic world of $p$.}\label{fig:cycle1}
\end{figure}

We say that a parity vector $p\in\Pa$ is the support of a cycle if there exists a Collatz cycle of period length $|p|$ of which $i^\text{th}$ element has the parity (i.e. least significant bit) prescribed by $p_i$. For instance $(0,1,0,1)\in\Pa$ is support to the cycle $(2,1,2,1)$ and the parity vector $(1,1,1,1)\in\Pa$ is support to the cycle $(-1,-1,-1,-1)$. It is known that for any $p\in\Pa$ there is a $z \in \Z_2 \cap \Q$ such that $p$ supports the cycle $(z,T(z),\dots,T^{|p|}(z))$ and such cycles are the only one existing in the Collatz process \cite{10.2307/2322189,wirsching1998the}. We call such $z\in \Z_2 \cap \Q$, \emph{Collatz cyclic numbers}:

\begin{definition}[Collatz cyclic numbers]\label{def:ccn}
Let $p\in\Pa$ and let $p^\infty$ denote $pp\cdots $.  
The Collatz cyclic number $z\in\Z_2 \cap \Q$ associated to $p$ is given by $z = Q^{-1}(p^\infty)$ where $Q: \Z_2 \to \{0,1\}^\N$ the one-to-one and onto map \cite{10.2307/2322189,Rozier2019ParitySO} which associates each 2-adic integer to its Collatz sequence's parity vector.
\end{definition}

We show that the \rcqca naturally yields to the construction of the 2-adic expansion of any Collatz cyclic number.
In order to provide the construction, we need to put a equivalence relation on the grid $\W$ which identifies some positions with each others:
\begin{definition}[The cyclic world \ensuremath{\W/\!\sim_p}]\label{def:initpt}
\normalfont
Let $p\in\Pa$ be a finite parity vector. Define $v_p = |p|*\We + |p|_1*\So$ with $|p|_1$ the number of $1$s in $p$. Define $\sim_p$ the following equivalence relation on $\W$: for $e,f\in \W$, $e\sim_p f$ if and only if $e-f = v_p$. We write $\W/\!\sim_p$ the quotient of $\W$ by $\sim_p$. Define $\tilde{c}_0[p] \in S^{\W/\!\sim_p}$ in the same way than $c_0[p] \in S^\W$. It is easily shown that the limit configuration  under action of the \rcqca, $\tilde{c}_\infty[p] \in S^{\W/\!\sim_p}$, is well defined.
\end{definition}

\begin{example}
Figure~\ref{fig:cycle1}(a) shows the initial configuration $\tilde{c}_0[p]$ associated to $p$ in the cyclic world $\W/\!\sim_p$. The two magenta outlined cells are at equivalent positions $e_1,e_2 \in \W$: we have $e_1 \sim_p e_2$, meaning that in $\W/\!\sim_p$ those two cells are the same. In particular they have the same state. Similary for cells outlined with the same colour in Figure~\ref{fig:cycle1}(b). Figure~\ref{fig:cycle1}(c) gives a portion of the limit configuration $\tilde{c}_\infty[p]$.
\end{example}

\begin{theorem}[Constructing Collatz cyclic numbers]\label{th:cycleconstruct}
\normalfont
Let $p\in\Pa$ be a finite parity vector and let $\tilde{c}_\infty[p] \in S^{\W/\!\sim_p}$ be the limit configuration of the \rcqca when $p$ is given as input in the cyclic world $\W/\!\sim_p$ (Definition~\ref{def:initpt}). Let $z = Q^{-1}(p^\infty) \in \Z_2 \cap \Q$ be the Collatz cyclic number associated to $p$ (Definition~\ref{def:ccn}). 

Then, the $2$-adic expansion of $z$ can be read by concatenating all the sum bits on row $y=0$.

\end{theorem}
\begin{proof}
By reversibility of the local rule of the \cqca running the forward \cqca from row $y=0$ constructed by the \rcqca will lead to the same choices at each cell. In particular that means, that the $2$-adic number written on row $y=0$ is undergoing the Collatz process with parity vector $p^\infty$. We know that this number is uniquely given by $z = Q^{-1}(p^\infty) \in \Z_2 \cap \Q$ \cite{10.2307/2322189,wirsching1998the}.
\end{proof}

\begin{example}\label{ex:cycle}
Figure~\ref{fig:cycle1}(c) shows an instance of Theorem~\ref{th:cycleconstruct} for parity vector $p = (1,1,0,0,0)$. On row $y=0$ (blue) we read the expansion $(00010110010)^\infty 00011 \in \Z_2$ (the outlined blue cells stop at then end of the first period of the expansion). As it is eventually periodic, it represents a rational number, but which one? In a general fashion, the following formula\footnote{The Python library \texttt{coreli} embeds functions to compute this formula, and to run the Collatz process on rational $2$-adic integers, see \url{https://github.com/tcosmo/coreli}.} holds to convert any eventually periodic $2$-adic integer to its associated fraction in $\Q$: let $w_0\in\{0,1\}^*$ be the initial segment and $w_1\in\{0,1\}^*$ be the infinitely repeated word then we have: $z = \ibin{w_0} + 2^{|w_0|}\ibin{w_1} \frac{1}{1-2^{|w_1|}} \in \Q$, see \cite{caruso:hal-01444183}. 
For the $3$-adic case: simply  replace $\{0,1\}^*$ by $\{0,1,2\}^*$, powers of $2$ by powers of $3$, and $\ibin{\cdot}$ by $\itr{\cdot}$. Here, the formula gives: $(00010110010)^\infty 00011 = \frac{5}{23} \in \Q$. This matches the fact that $\frac{5}{23} = Q^{-1}(p^\infty)$ indeed, the Collatz sequence of $\frac{5}{23}$ is: $(5/23,19/23,40/23,20/23,10/23,5/23,\ldots)$ whose parity vector is $p^\infty$ with $p = (1,1,0,0,0)$ (the parity of a rational $2$-adic is given by the parity of the numerator).
\end{example}

\subsubsection{The cyclic Collatz conjecture as a  \rcqca reachability problem}

Given Theorem~\ref{th:cycleconstruct}, the cyclic Collatz conjecture in $\N$ and $\Z$ can be reformulated as follows:

\begin{definition}[Parity vector cut]\label{def:pvc}
Let $p\in\Pa$. Consider $\mathcal{C} = \{ e_0, \dots , e_{|p|-1}\}$ the set of the positions of initial (half-defined) cells in $\tilde{c}_0[p] \in S^{\W/\!\sim_p}$ (Definition~\ref{def:initpt}). 
The parity vector cut associated to $v\in\Z^2$ is the set of positions  $\mathcal{C}_v = \{ e_0+v, \dots , e_{|p|-1}+v \} $.
\end{definition}

\begin{example}
Figure~\ref{fig:cycle1}(b) shows three parity vector cuts of $p=(1,1,0,0,0)\in\Pa$: all the cells underlined with the same color are in the same parity vector cut.
\end{example}

\begin{lemma}[Natural number and integer Collatz cycles]\label{lem:icc}
Let $p\in\Pa$ and $\tilde{c}_\infty[p] \in S^{\W/\!\sim_p}$ (Definition~\ref{def:initpt}).
\begin{enumerate}
\item $p$ is the support of a Collatz cycle with values in $\N$ if and only if there is $v\in\Z^2$ such that, in $\tilde{c}_\infty[p]$, all $|p|$ cells in the parity vector cut $\mathcal{C}_v$ are in state $(0,0)$ (Definition~\ref{def:pvc}).\label{pp:1}
\item $p$ is the support of a Collatz cycle with values in $\Z\setminus \N$ if and only if there is $v\in\Z^2$ such that, in $\tilde{c}_\infty[p]$, all $|p|$ cells in the parity vector cut $\mathcal{C}_v$ are in state $(1,1)$ (Definition~\ref{def:pvc}).\label{pp:2}
\end{enumerate}
Furthermore, the Collatz conjecture in $\N$ becomes: the only parity vectors for which Point~\ref{pp:1} holds are powers (in the sense of concatenation), and powers of rotations of $(0,1)\in\Pa$, corresponding to the cycle $(1,2,\dots)$.

The Collatz conjecture in $\Z\setminus \N$ becomes: the only parity vectors for which Point~\ref{pp:2} holds are powers and powers of rotations of $(1)$ and $(1,1,0) \text{ and } (1,1,1,1,0,1,1,1,0,0,0) \in\Pa$ which respectively correspond to the cycles $(-1,-1,\ldots)$, $(-5,-7,-10,-5,\ldots)$ and\\  $(-17, -25, -37, -55, -82, -41, -61, -91, -136, -68, -34, -17, \ldots)$.

\end{lemma}
\begin{proof}
Let $\mathcal{C}_v$ be the parity vector cut such that, for all $e\in\mathcal{C}_v$ we have $\tilde{c}(e)_\infty[p] = (0,0)$. By applying the equivalence relation $\sim_p$, we can translate the cut: there exists $x_0 < 0$ and $v' = (x_0,0)$ such that for all $e \in \mathcal{C}_v'$ we have $\tilde{c}(e)_\infty[p] = \tilde{c}(e+(v-v'))_\infty[p] = (0,0)$. Now, by application of the \rcqca local rule to cut $\mathcal{C}_v'$, we get that the cut $\mathcal{C}_v''$ with $v'' = (x_0-1,0)$ also contains only cells in state $(0,0)$. By induction, for all $x < x_0$, the cut $\mathcal{C}_{v_x}$ with $v_x = (x,0)$ contains only cells in state $(0,0)$. In particular, row $y=0$ starts with infinite segment of contiguous cells in state $(0,0)$. Hence, concatenating all its sum bits gives a $2$-adic expansion of the form $(0)^\infty w$ with $w\in\{0,1\}^*$, i.e. a natural number. By Theorem~\ref{th:cycleconstruct}, the Collatz cyclic number given by row $y=0$, $z=Q^{-1}(p^\infty)$ is in $\N$ and $p$ supports its cycle which gives Points~\ref{pp:1}. In the previous argument, replace state $(0,0)$ by $(1,1)$, prefix $(0)^\infty$ by $(1)^\infty$ and $\N$ by $\Z\setminus \N$ and we get the proof of Point~\ref{pp:2}.
\end{proof}
\begin{example}
Figure~\ref{fig:cycle2} shows the construction of all 4 non-zero integer Collatz cycles in the \rcqca, i.e. the cycles of $1,-1,-5$ and $-17$. For instance, in Figure~\ref{fig:cycle2}(c) the parity vector $p=(1,1,0)$, which supports the cycle $(-5,-7,-10,-5,\ldots)$ is inputted to the \rcqca. By Theorem~\ref{th:cycleconstruct}, on row $y=0$ we read the $2$-adic expansion of $-5$. Indeed, we read $z = (1)^\infty 011 \in \Z_2$ which is such that $z + (0)^\infty (101) = (0)^\infty$. Cells which are underlined in magenta are all in the first parity vector cut $\mathcal{C}_v$ (Definition~\ref{def:pvc}) with $v$ of the form $(x,0)$ with $x<0$ such that all cells in the cut are all in state $(1,1)$. By Lemma~\ref{lem:icc}, reaching such a parity vector cut is equivalent to saying that $p$ is the support of a stritcly negative cycle.
\end{example}

\begin{remark}
Note that, starting from $\tilde{c}(e)_0[p]$, cells in the parity vector cuts $\mathcal{C}_{v_x}$ with $v_x=(x,0)$ and $x<0$ could be constructed sequentially (by application of the \rcqca local rule) one cut after the other. First, this construction makes the fact that cyclic Collatz number are rational obvious: because each $\mathcal{C}_{v_x}$ is finite and the local rule deterministic, the states of cells on two different $\mathcal{C}_{v_x}$ will eventually be the same. This implies that each row of the cyclic world is eventually periodic in states and sum bits, hence, each row represents a rational $2$-adic: any Collatz cyclic number is rational. Second, in that sequential setting, the cyclic Collatz conjecture in $\N$ (resp. $\Z \setminus  \N$) becomes a reachability problem: from parity vector $p\in\Pa$, will a cut $\mathcal{C}_{v_x}$ with only states $(0,0)$ (resp. $(1,1)$) will ever be reached? 
\end{remark}

\begin{remark}
Known results \cite{wirsching1998the} on Collatz cycles imply that Point~\ref{pp:1} can hold only for $p$ such that $2^{|p|} \geq 3^{|p|_1}$ and Point~\ref{pp:2} can only hold for $p$ such that $2^{|p|} < 3^{|p|_1}$.
\end{remark}

\subsubsection{Infinite base conversion in the \rcqca}\label{sec:infbc}

Finally, it is an immediate consequence\footnote{In \cite{wirsching1998the}, it is shown that the denominator of a cyclic Collatz number is of the form $2^n - 3^m$ which can never be a multiple of $3$.} of \cite{wirsching1998the} that, for any $p\in\Pa$, we have that $Q^{-1}(p^\infty)$ is in $\Z_3$, i.e. $Q^{-1}(p^\infty)$ is both a $2$-adic and $3$-adic integer, i.e. it has both an infinite base 2 and an infinite base 3 integer representation. Hence, it is natural to interpret columns of $\tilde{c}_\infty[p]$ as elements of $\Z_3$ (i.e. infinite ternary expansion) in the exact same way as we interpreted columns in base $3$ in the finite case (Lemma~\ref{lem:cols}). We conjecture that Theorem~\ref{th:base_conversion} still holds:

\begin{conjecture}[Infinite base conversion]\label{cj:bc}
Let $p\in\Pa$ be a finite parity vector and let $\tilde{c}_\infty[p] \in S^{\W/\!\sim_p}$ be the limit configuration of the \rcqca when $p$ is given as input in the cyclic world $\W/\!\sim_p$ (Definition~\ref{def:initpt}). Let $z = Q^{-1}(p^\infty) \in \Z_2 \cap \Z_3 \cap \Q$ be the Collatz cyclic number associated to $p$ (Definition~\ref{def:ccn}).

Then, in $\tilde{c}_\infty[p]$, the $2$-adic expansion of $z$ is to be read on row $y=0$ (Theorem~\ref{th:cycleconstruct}) and the $3$-adic expansion of $z$ can be read by concatenating all base $3'$ symbols (and converting them, as usual, to ternary digits) on column $x=-|p|+1$.

Furthermore, for any anchor cell, i.e. cell at position $e$ such that both $\tilde{c}_\infty[p](e+\We)$ and $\tilde{c}_\infty[p](e+\No)$ are defined, the \emph{base conversion property} holds: the column strictly to the north of $e$ gives the $3$-adic expansion of $z\in\Z_2\cap \Z_3 \cap \Q$ of which 2-adic expansion is given strictly to the west of $e$.

\end{conjecture}

\begin{example}
Figure~\ref{fig:cycle1}(c) gives an example of Conjecture~$\ref{cj:bc}$. Indeed, the column in orange reads $(20021001011)^\infty 201 = \frac{5}{23}$ (see Example~\ref{ex:cycle} for the formula to compute fractions from eventually periodic $3$-adic expansions). That implies that the cell directly to the south of the orange column has the base conversion property: the row directly to its west (which is by equivalence in $\W/\!\sim_p$, the same as row $y=0$) gives the $2$-adic expansion of $\frac{5}{23}$ and the column directly to its north give the $3$-adic one (with base $3'$ symbols). In that sense, in the case of cyclic Collatz numbers, Theorem~\ref{th:base_conversion} is generalised to inifnite base 3/base 2 expansions.
\end{example}

% \begin{remark}
% We also leave open the more general problem concerning the ability of the \rcqca to convert \emph{any} element (not just Collatz cyclic numbers) $z\in\Z_2 \cap \Z_3$ from $\Z_3$ to $\Z_2$.
% \end{remark}
\noindent {\bf Acknowledgement.} 
Sincere thanks to Olivier Rozier for fruitful interactions, and to anonymous reviewers for helpful comments. 
\vspace{-3ex}%have ref only on 2 pages
\bibliographystyle{plain}
\bibliography{collatz2-main}
\else
\bibliographystyle{splncs04}
\bibliography{collatz2-main}

\begin{thebibliography}{10}

\bibitem{Adamczewski2007}
Boris Adamczewski and Yann Bugeaud.
\newblock On the complexity of algebraic numbers i. expansions in integer
  bases.
\newblock {\em Annals of Mathematics}, 165(2):547--565, March 2007.

\bibitem{baseChanges}
Boris Adamczewski and Colin Faverjon.
\newblock Mahler's method in several variables ii: Applications to base change
  problems and finite automata, September 2018.
\newblock Preprint: \url{https://arxiv.org/abs/1809.04826}.

\bibitem{barrington1990uniformity}
David A~Mix Barrington, Neil Immerman, and Howard Straubing.
\newblock On uniformity within {NC1}.
\newblock {\em Journal of Computer and System Sciences}, 41(3):274--306, 1990.

\bibitem{nlin/0502061}
M.~Bruschi.
\newblock Two cellular automata for the 3x+1 map, 2005.
\newblock Preprint: \url{https://arxiv.org/abs/nlin/0502061}.

\bibitem{capco:hal-02062503}
Jose Capco.
\newblock {Odd {C}ollatz Sequence and Binary Representations}, March 2019.
\newblock Preprint, \url{https://hal.archives-ouvertes.fr/hal-02062503}.

\bibitem{caruso:hal-01444183}
Xavier Caruso.
\newblock {Computations with p-adic numbers}.
\newblock In {\em {Journ{\'e}es Nationales de Calcul Formel}}, Les cours du
  CIRM. 2018.

\bibitem{cloney19873x+}
Thoma Cloney, Eric Goles, and Gerard~Y Vichniac.
\newblock The $3x+1$ problem: a quasi cellular automaton.
\newblock {\em Complex Systems}, 1(2), 1987.

\bibitem{DBLP:journals/tcs/Colussi11}
Livio Colussi.
\newblock {The convergence classes of {C}ollatz function}.
\newblock {\em Theor. Comput. Sci.}, 412:5409--5419, 2011.

\bibitem{Conrad}
Keith Conrad.
\newblock The p-adic expansion of rational numbers.
\newblock
  \url{https://kconrad.math.uconn.edu/blurbs/gradnumthy/rationalsinQp.pdf}.

\bibitem{Conway}
J.H Conway.
\newblock Unpredictable iterations.
\newblock {\em Number Theory Conference}, 1972.

\bibitem{goles2015freezing}
Eric Goles, Nicolas Ollinger, and Guillaume Theyssier.
\newblock Introducing freezing cellular automata.
\newblock In {\em Exploratory Papers of Cellular Automata and Discrete Complex
  Systems (AUTOMATA 2015)}, pages 65--73, 2015.

\bibitem{hesse2002uniform}
William Hesse, Eric Allender, and David A~Mix Barrington.
\newblock Uniform constant-depth threshold circuits for division and iterated
  multiplication.
\newblock {\em Journal of Computer and System Sciences}, 65(4):695--716, 2002.

\bibitem{DBLP:journals/tcs/Hew16}
Patrick~Chisan Hew.
\newblock {Working in binary protects the repetends of 1/3\({}^{\mbox{h}}\):
  Comment on Colussi's 'The convergence classes of {C}ollatz function'}.
\newblock {\em Theor. Comput. Sci.}, 618:135--141, 2016.

\bibitem{Jakobsen2020}
Sune~Kristian Jakobsen and Jakob~Grue Simonsen.
\newblock Liouville numbers and the computational complexity of changing bases.
\newblock In {\em Lecture Notes in Computer Science}, pages 50--62. Springer
  International Publishing, 2020.

\bibitem{inpKari_Jarkko13a}
Jarkko Kari.
\newblock Cellular automata, the {C}ollatz conjecture and powers of 3/2.
\newblock In Hsu-Chun Yen and Oscar Ibarra, editors, {\em Developments in
  Language Theory - 16th International Conference}, volume 7410 of {\em Lecture
  Notes in Computer Science}, page 40–49. Springer, 2012.

\bibitem{Koiran1999}
Pascal Koiran and Cristopher Moore.
\newblock {Closed-form analytic maps in one and two dimensions can simulate
  universal Turing machines}.
\newblock {\em Theoretical Computer Science}, 210(1):217--223, January 1999.

\bibitem{Korec1992}
Ivan Korec.
\newblock The $3x+1$ problem, generalized pascal triangles and cellular
  automata.
\newblock {\em Mathematica Slovaca}, 42(5):547--563, 1992.
\newblock \url{http://eudml.org/doc/32424}.

\bibitem{DBLP:conf/tamc/KurtzS07}
Stuart~A. Kurtz and Janos Simon.
\newblock {The Undecidability of the Generalized {C}ollatz Problem}.
\newblock In {\em {TAMC} 2007}, pages 542--553, 2007.

\bibitem{10.2307/2322189}
Jeffrey~C. Lagarias.
\newblock The $3x + 1$ problem and its generalizations.
\newblock {\em The American Mathematical Monthly}, 92(1):3--23, 1985.

\bibitem{minsky1967computation}
Marvin Minsky.
\newblock {\em Computation: Finite and Infinite Machines}.
\newblock Prentice Hall, Inc., Engelwood Cliffs, NJ, 1967.

\bibitem{Monks2004}
Kenneth~G. Monks and Jonathan Yazinski.
\newblock The autoconjugacy of the 3x+1 function.
\newblock {\em Discrete Mathematics}, 275(1-3):219--236, January 2004.

\bibitem{moore1990unpredictability}
Cristopher Moore.
\newblock Unpredictability and undecidability in dynamical systems.
\newblock {\em Physical Review Letters}, 64(20):2354, 1990.

\bibitem{moore1991generalized}
Cristopher Moore.
\newblock Generalized shifts: unpredictability and undecidability in dynamical
  systems.
\newblock {\em Nonlinearity}, 4(2):199, 1991.

\bibitem{moore2011nature}
Cristopher Moore and Stephan Mertens.
\newblock {\em The nature of computation}.
\newblock Oxford University Press, 2011.

\bibitem{Rozier2019ParitySO}
Olivier Rozier.
\newblock Parity sequences of the 3x+1 map on the 2-adic integers and euclidean
  embedding.
\newblock {\em Integers}, 19:A8, 2019.

\bibitem{Shallit1992The}
Jeffrey Shallit and David~A. Wilson.
\newblock The "3x + 1" problem and finite automata.
\newblock {\em Bulletin of the EATCS}, 46:182--185, 1992.

\bibitem{1907.00775_RP_2020_sub}
Tristan Stérin.
\newblock Binary expression of ancestors in the {C}ollatz graph.
\newblock In Sylvain Schmitz and Igor Potapov, editors, {\em 14th International
  Conference on Reachability Problems}, LNCS. Springer, 2020.
\newblock To appear. Full version: \url{https://arxiv.org/abs/1907.00775v4}.

\bibitem{sultanow2017}
Eldar Sultanow, Denis Volkov, and Sean Cox.
\newblock Introducing a finite state machine for processing collatz sequences.
\newblock {\em International Journal of Pure Mathematical Sciences}, 19:10--19,
  12 2017.

\bibitem{Terras1976}
Riho Terras.
\newblock A stopping time problem on the positive integers.
\newblock {\em Acta Arithmetica}, 30(3):241--252, 1976.

\bibitem{vollmar1981freezing}
R~Vollmar.
\newblock On cellular automata with a finite number of state changes.
\newblock In {\em Parallel Processes and Related Automata/Parallele Prozesse
  und damit zusammenh{\"a}ngende Automaten}, pages 181--191. Springer, 1981.
\newblock Volume 3 of Computing Supplementum.

\bibitem{wirsching1998the}
Günther~J. Wirsching.
\newblock {\em The dynamical system generated by the 3n + 1 function}.
\newblock Springer, Berlin New York, 1998.

\end{thebibliography}
\fi

\if 0\mode
\appendix
%auto-ignore
% !TEX root = collatz2-main.tex

\clearpage
\section{Additional figure}\label{app:gimmemore}
\begin{figure}[h!]

\begin{subfigure}[b]{0.475\textwidth}
\centering
\includegraphics[scale=1]{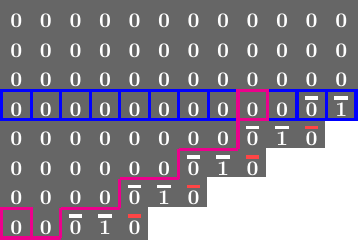} 
\caption{Cycle of $z=1$ as supported by $(1,0)^4 \in \Pa$.}
\end{subfigure}
\hfill
\begin{subfigure}[b]{0.475\textwidth}
\centering
\includegraphics[scale=1]{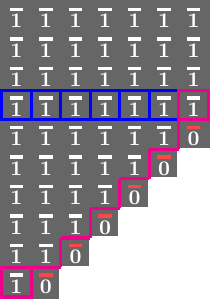} 
\caption{Cycle of $z=-1$ as supported by $(1)^6 \in \Pa$.}
\end{subfigure}
\vskip\baselineskip \vskip\baselineskip 
\centering
\begin{subfigure}[b]{0.475\textwidth}
\centering
\includegraphics[scale=1]{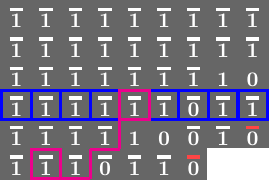} 
\caption{Cycle of $z=-5$ as supported by $(1,1,0) \in \Pa$.\vspace\baselineskip}
\end{subfigure}
\hfill
\begin{subfigure}[b]{0.45\textwidth}
\centering
\includegraphics[scale=0.7]{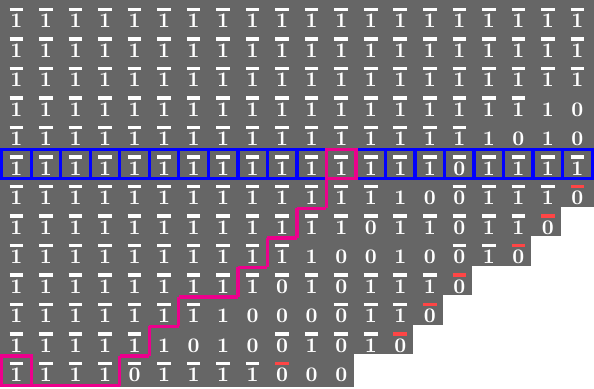} 
\caption{Cycle of $z=-17$ as supported by $(1,1,1,1,0,1,1,1,0,0,0) \in \Pa$.}
\end{subfigure}

\caption{Construction of the $4$ non-zero integer Collatz cycles in the \rcqca. For each $z\in\{1,-1,-5,-17\}$, a parity vector $p$, supporting the cycle of $z$ was inputted to the \rcqca in the cyclic world $\W/\!\sim_p$ and, in the limit configuration $\tilde{c}_\infty[p]$, the cycle is constructed (Definition~\ref{def:initpt} and Theorem~\ref{th:cycleconstruct}). Each panel of the figure highlights row $y=0$ (blue) which gives the $2$-adic extension of $z$ (Theorem~\ref{th:cycleconstruct}) as well as the first parity vector cut $\mathcal{C}_v$ (underlined in magenta, see Definition~\ref{def:pvc}) with $v$ of the form $v=(x_0,0)$ and $x_0 < 0$ such that all the cells in $\mathcal{C}_v$ are $(0,0)$ (cycle in $\N$) or $(1,1)$ (cycle in $\Z\setminus\N$), see Lemma~\ref{lem:icc}. Cells that are both outlined in magenta are at equivalent position in the cyclic world $\W/\!\sim_p$ (in particular they must contain have same state).
}\label{fig:cycle2}
\end{figure}

    \clearpage

%auto-ignore
% !TEX root = collatz2-main.tex

\clearpage

\section{The $3x+1$ and dual $x/2$ finite state transducers}\label{app:fst}
In this appendix we formally define the two FSTs shown in Figure~\ref{fig:fst} and give some results about them.
\subsection{Definitions and duality}
%\subsection{The $3x+1$ and dual $x/2$ FSTs}

Any affine transformation $ax+b$, with $a,b\in\N$, is computed, in any base, by a Finite State Transducer (FST), \cite{baseChanges}. Hence, it is not suprising that the $3x+1$ binary FST, Figure~\ref{fig:fst}(a), exists and, it was described in a similar way (but, merging states $\bar{0}$ and $1$) in \cite{sultanow2017}. However, what is intriguing is that, taking the dual of the $3x+1$ binary FST, gives rise to the FST, Figure~\ref{fig:fst}(b), which computes $x/2$ in base $3'$ (Lemma~\ref{lem:dual}) -- this duality phenomenon was unknown to us. Furthermore, the local rule of the \cqca is able to run both FSTs: horizontal applications of the rule simulate the $3x+1$ FST, Figure~\ref{fig:fst}(c), while vertical applications simulate the $x/2$ FST, Figure~\ref{fig:fst}(d). We recall that base $3'$ encodings are defined in Definition~\ref{def:btp}. 

Interestingly, base $3'$ acts like a bridge between base $2$ and base $3$, this is better explained by looking at the $3x+1$ and the $x/2$ FSTs (Figure~\ref{fig:fst}):

\begin{definition}[The $3x+1$ and $x/2$ FSTs]
\normalfont
\begin{enumerate}
\item The $3x+1$ FST has four states: $Q = \{0,\bar{0},1,\bar{1}\}$ and uses the two digit alphabet $\Sigma = \{0,1\}$.
Its transition function if given by the diagram in Figure~\ref{fig:fst}(a) and its initial state is $\bar 0$. It computes the map $3x+1$ in binary (Lemma~\ref{lem:fst3x}). 
\item The $x/2$ FST has two states: $Q' = \{0,1\}$ and uses the four symbol alphabet $\Sigma' = \{0,\bar{0},1,\bar{1}\}$. Its transition function if given by the diagram in Figure~\ref{fig:fst}(b) and its initial state is $0$. It computes the map $x/2$ in base $3'$ (Lemma~\ref{lem:fstx2}).
\end{enumerate}
\end{definition}

\begin{lemma}[Duality]\label{lem:dual}
\normalfont
The $3x+1$ FST and the $x/2$ FSTs are \emph{dual} to one another by which we mean that $Q = \Sigma'$, $Q'=\Sigma$ and $(q_1 \to q_2, s_1:s_2)$ is a transition in the $3x+1$ FST if and only if $(s_1 \to s_2, q_1:q_2)$ is a transition in the $x/2$ FST.\footnote{Note that this definition of duality does not specify any constraints on the choice of initial states.}
\end{lemma}
\begin{proof}
Immediate from Figure~\ref{fig:fst}.
\end{proof}

\begin{remark}
The local rule of the \cqca, which performs both the computation of the $3x+1$ FST (row by row) and of the $x/2$ FST (column by column), see Figure~\ref{fig:fst}(c) and (d), embeds the idea that, modulo duality, both FSTs (and their respective computation) are the same.
\end{remark}

\subsection{Correctness of computations}

Furthermore, these FSTs compute what their names suggest:

\lemT*
\begin{proof} 
We prove the following more general induction hypothesis on $n$, the size of the input word: $H(n):$ ``For $w\in\{0,1\}^n$, inputting $00w$ to the $3x+1$ FST, and starting at any initial state $q_0 \in \{0,\bar 0,1,\bar 1\}$, outputs the binary representation (with potential leading $0$s) of: $3\times \ibin{w} + \seum(q_0)$'', where we let $\seum(0) = 0, \seum(\bar 0) = \seum(1) = 1$ and $\seum(\bar 1) = 2$.

\textbf{Base case.} For $n=1$, following the diagram of Figure~\ref{fig:fst}(a) gives outputs $000, 001, 001, 010$ (representing $0,1,1,2$, and written right-to-left as usual) when inputting $000$ respectively from initial states $0,\bar 0, 1, \bar 1$. For input $001$, the following is output: $011, 100, 100, 101$  (representing $3,4,4,5$) when respectively starting from initial states $0,\bar 0, 1, \bar 1$. Hence, $H(1)$ holds. 

\textbf{Induction.} Let $n\in\N$ and let suppose that $H(n)$ holds. Let $w = w_{n}\dots w_0\in\{0,1\}^{n+1}$ and $x = \ibin{w}\in\N$. Let's input $00w$ to the $3x+1$ FST from initial state $q_0 = \bar 0$. Let's write $w'= w'_{n+2}\dots w'_{0}$ as the output of the FST. If $w_0 = 1$, then the second FST state is $q_1 = \bar{1}$ and $w'_0 = 0$ is output. Then, the word $w_{n}\dots w_1$ remains to be read, starting in state~$\bar{1}$. By the induction hypothesis, we get $\ibin{w'_{n+2}\dots w'_{1}} = 3\times \ibin{w_{n}\dots w_1} + 2 = 3 \frac{x-1}{2} + 2$. Hence, $\ibin{w'_{n+2}\dots w'_{0}} = 2\times\ibin{w'_{n+2}\dots w'_{1}} = 2\times(3 \frac{x-1}{2} + 2) = 3(x-1) + 4 = 3x + 1$. Which is what we wanted. All the other cases (different $q_0$ and $w_0$) are treated similarly. 
\end{proof}

\begin{remark}
\label{rk:z2}
Note that, a more intuitive way to describe the behvior of the $3x+1$ FST is the following: it sums each bit of the input to its right neighbor and the potential carry placed on it (encoded in the state of the FST). That corresponds to the binary addition $x+2x = 3x$. The initial state gives the border condition: $0$ for $3x + 0$, $\bar 0$ and $1$ for $3x+1$, and $\bar 1$ for $3x+2$. That remark leads to the fact that inputting infinite binary words to the $3x+1$ FST leads to computing the $3x+1$ map in $\Z_2$, the ring of $2$-adic integers. Indeed, in $\Z_2$, the rules of computation are the exact same as for finite binary strings \cite{caruso:hal-01444183} and the above interpretation of the $3x+1$ operation will generalise effortlessly. Hence, Definition~\ref{def:icw} can be generalised to handle infinite input words $w\in\{0,1\}^\N$ and their initial and limit \cqca configuration $c_0[w],c_\infty[w]\in S^\W$ and Lemma~\ref{lem:rows} can extend to showing that infinite rows of the \cqca simulate the Collatz process in $\Z_2$.

\end{remark}

\lemD*
\begin{proof}
We prove the following more general induction hypothesis on $n$, the size of base $3'$ representations: 
$H(n):$ ``Let $\gamma$ be the base $3'$ representation of $x\in\N$ potentially including leading $0$s, such that $|\gamma| = n$. Then, inputting $\gamma$ to the $x/2$ FST from initial state $q_0 = 0$ outputs $\gamma'$ such that: $x = 2\times \itp{\gamma'} + q_{|\gamma|}$ with $q_{|\gamma|}$, the state of the FST immediately after reading all the symbols of $\gamma$. Starting from initial state $q_0 = 1$ outputs $\gamma'$ such that: $x+3^n = 2\times \itp{\gamma'} + q_{|\gamma|}$.''. Note that contrary to the $3x+1$ FST, the $x/2$ FST reads inputs and produces outputs from the most significant symbol to the least significant.

\textbf{Base case.} Reading Figure~\ref{fig:fst}(b) gives: for $q_0=0$ and inputs $0,\bar 0, \bar 1$, the outputs are respectively $0,0,1$ and final states respectively $0,1,0$, as expected: the output is the quotient of division by $2$ and final state the remainder. For $q_0 = 1$ and inputs $0,\bar 0, \bar 1$, the outputs are respectively $1,\bar{1},\bar{1}$ and final states respectively $1,0,1$. Because $n=1$ and $3^n=3$, this respectively interprets as: $0 + 3 = 2\times 1 + 1$, $1 + 3 = 2\times2 + 0$ and $2 + 3 = 2\times 2 + 1$, as expected. Note that we don't have to check for base 3$'$ input $1$ as it is not a valid base $3'$ representation of an integer (Definition~\ref{def:btp}). Hence, $H(1)$ holds.

\textbf{Induction.} Let $n\in\N$ and let's suppose that $H(n)$ holds. Let $\gamma\in\{0,\bar 0, 1, \bar 1\}^*$ be the base $3'$ representation  of a natural number $x\in\N$ (Definition~\ref{def:btp}) such that $|\gamma| = n+1$. We write $\gamma = \gamma_{n} \dots \gamma_{0}$. Let's suppose $\gamma_{n} = \bar 0$ and let's input $\gamma$ to the $x/2$ FST starting on state $q_0 = 0$, we write $\gamma' = \gamma'_{n} \dots \gamma'_{0}$ as the output. After the first step, by reading Figure~\ref{fig:fst}(b), we get $\gamma'_{n} = 0$ and $q_1 = 1$, the next FST state. By the induction hypothesis, because $\gamma_{n-1} \dots \gamma_{0}$ is a valid base $3'$ representation of an integer (immediate from Definition~\ref{def:btp}), we get that $2\times \itp{\gamma'_{n-1} \dots \gamma'_{0}} + q_{n+1} = 3^n + \itp{\gamma_{n-1} \dots \gamma_{0}}$ with $q_{n+1}$ the sate after reading the $n$ symbols of $\gamma_{n-1} \dots \gamma_{0}$ (equivalenty, the $n+1$ symbols of $\gamma$). Hence, because $\gamma'_{n} = 0$, we get $2\times \itp{\gamma'} + q_{n+1} = 0*3^{n+1} + 3^n + \itp{\gamma_{n-1} \dots \gamma_{0}} = 3^n + \itp{\gamma_{n-1} \dots \gamma_{0}}$. But, since $\gamma_{n} = \bar 0$, we have $3^n + \itp{\gamma_{n-1} \dots \gamma_{0}} = \itp{\gamma_{n} \dots \gamma_{0}}$. Hence, we get $2\times \itp{\gamma'} + q_{n+1} = \itp{\gamma}$ which is what we want. All the other cases (different $q_0$ and $\gamma_{n}$) are treated similarly.

Hence, $H(n)$ holds for all $n$. In particular, that implies the result of this lemma: when we input the base $3'$ representation of $x\in\N$, the $x/2$ FST outputs $\gamma'$, from most significant to least significant symbol, such that: $\itp{\gamma'} = \lfloor \frac{x}{2} \rfloor$ and the final state of the $x/2$ FST gives the parity of~$x$.
\end{proof}

\begin{remark}
Note that the proof of Lemma~\ref{lem:fstx2} shows that, when $\gamma$, the base $3'$ representation of some $x\in\N$ (with potential leading $0$s) is inputted to the $x/2$ FST, starting on initial state $1$ instead of $0$, the following output is computed after $|\gamma|$ steps: $\lfloor \frac{3^{|\gamma|}+x}{2} \rfloor$. For $x$ odd, this operation corresponds to $\frac{3^{|\gamma|}+x}{2}$ which, if $x$ is not a multiple of $3$, is the expression of $2^{-1} * x$ in the multiplicative group $(\Z/3^k \Z)^*$. Said otherwise, the $x/2$ FST is still dividing by $2$ but, in modular arithmetic. 
\end{remark}

\begin{remark}\label{rk:inv}
In terms of reversibility, duality gives some additional structure to the FSTs. The $3x+1$ FST is reversible: inverting read/write instructions leads to a deterministic FST. % We denote the inverse of the $3x+1$ FST by $\frac{y-1}{3}$. 
The $x/2$ FST is not reversible but, the dual of the inverse of the $3x+1$ FST is the $x/2$ FST of which arrows are inverted. More generally, inverting read/write instructions of one FST corresponds to inverting arrows in the other.
\end{remark}

\begin{remark}
It would be interesting to explore whether, for any pair of such \emph{dual} FSTs, a 2D automaton, similar to the \cqca which runs both FSTs at the same time (one in each direction), can be constructed or not and if problems similar to the Collatz conjecture or Lagarias' periodicity conjecture \cite{10.2307/2322189} can be stated.
\enlargethispage{\baselineskip}
\enlargethispage{\baselineskip}
\end{remark}

\fi

\end{document}